\newcommand*{\Lie}{\mathcal{L}}
\newcommand*{\NN}{\mathbb{N}}
\newcommand*{\QT}{\mathrm{QT}}
\newcommand*{\RR}{\mathbb{R}}
\newcommand*{\ZZ}{\mathbb{Z}}
\DeclareMathOperator{\e}{e}
\DeclareMathOperator{\supp}{supp}
\DeclarePairedDelimiter\abs\lvert\rvert
\DeclarePairedDelimiter\mb{\lbrace\!\lbrace}{\rbrace\!\rbrace}
\newcommand*{\dd}{\mathrm{d}}
\newcommand*{\ddd}[1]{\mathrm{d}^{#1} \!}
\newcommand*{\gnd}{\mathrm{gnd}}
\newcommand*{\eff}{\mathrm{eff}}
\DeclareMathOperator{\ini}{in}
\newtheorem{theorem}{Theorem}
\newtheorem{corollary}[theorem]{Corollary}
\newtheorem{example}[theorem]{Example}
\newtheorem{proposition}[theorem]{Proposition}
\begin{document}
\title{Generalized dynamical theories in phase space and the hydrogen atom}

\author{Martin Pl\'{a}vala}
\email{martin.plavala@uni-siegen.de}
\affiliation{Naturwissenschaftlich-Technische Fakultät, Universität Siegen, 57068 Siegen, Germany}

\author{Matthias Kleinmann}
\affiliation{Naturwissenschaftlich-Technische Fakultät, Universität Siegen, 57068 Siegen, Germany}

\begin{abstract}
We show that the phase-space formulation of general probabilistic theories can be extended to include a generalized time-evolution and that it can describe a nonquantum hydrogen-like system which is stable, has discrete energy levels, and includes the Zeeman effect. This allows us to study dynamical effects such as excitations of the hydrogen-like system by a resonant laser and Rutherford scattering. Our construction demonstrates that classical theory and quantum theory can be seen as specific choices of general probabilistic theory in phase space and that other probabilistic theories also lead to measurable predictions.
\end{abstract}

\maketitle

%%%%%%%%%%%%%%%%%%%%%%%%%%%%%%%%%%%%%%%%%%%
\section{Introduction}
General probabilistic theories (GPTs) are a widely accepted mathematical 
framework that includes classical and quantum theories as special cases: this 
is the case in operational settings 
\cite{JanottaHinrichsen-review,Lami-thesis,Mueller-review,Plavala-review,Leppajarvi-thesis} 
and recently it was discovered that the same is true for harmonic oscillator 
\cite{PlavalaKleinmann-oscillator}. GPTs allow us to compare the computational 
power of classical and quantum theories 
\cite{BarrettBeaudrapHobanLee-GPTcomputation,Perinotti-highOrderComputation,GilliganLee-computation} 
and to identify key non-classical structures that enable a quantum advantage 
\cite{Spekkens-contextuality,MatsumotoKimura-superInformationStorability,SelbySikora-money,HeinosaariLeppajarviPlavala-noFreeInformation,SikoraSelby-coinFlipping,SchmidSpekkens-stateDiscrimination,DArianoErbaPerinotti-classicalEntanglement}, 
but also to develop experimental tests of the structure of quantum theory 
\cite{Mazurek_2021,RenouTrilloWeilenmannLeTavakoliGisinAcinNavascues-realQT,MuellerGarner-testingQT}. 
The framework of GPTs also includes exotic theories that were previously used 
for analysis of violations of Bell inequalities \cite{Barrett-GPTinformation}, 
analysis of steering \cite{CavalcantiSelbySikoraGalleySainz-witworld}, as well 
as to formulate device-independent cryptographic protocols 
\cite{Zhang_2022,Nadlinger_2022}. One can also add additional postulates in 
order to derive finite-dimensional quantum theory 
\cite{Hardy-derivationQT,MasanesMuller-derivationQT,ChiribellaDArianoPerinotti-derivationQT,Wilce-derivationQT,Buffenoir-derivationQT}.

However, one faces two main difficulties when one aims to describe physical systems like the hydrogen atom within a GPT. First, one needs to construct a model in the formalism of GPTs for which one can reasonably argue that it constitutes a model of the hydrogen atom. The problem of model building is addressed by a phase-space formulation generalizing the Wigner-Weyl formalism in quantum theory \cite{Spekkens-epistricted,LinDahlsten-tunneling,PlavalaKleinmann-oscillator}. Second, in order to obtain predictions beyond simple static predictions, one needs to formulate the time-evolution of the system. In quantum theory, a general Markovian time-evolution is given by the Lindblad equation. In fact, besides Markovianity the Lindblad equation is a consequence of the assertion that the system is described by a quantum state, for all times. While Markovianity can be directly formulated in GPTs \cite{GrossMuellerColbeckDahlsten-dynamicsGPT,AlSafiShort-boxworldDynamics,AlSafiRichens-reversibleDynamics,BranfordDahlstenGarner-hamiltonDynamics}, the positivity conditions are more difficult to implement. This difficulty has at least two sources. First, the set of states may not be fully specified, for example, when no exhaustive set of observables has been described. Second, even if the set of states is known, positivity alone may not be a sufficient condition for similar reasons that in quantum theory one needs to consider complete positivity \cite{HeinosaariZiman-MLQT}. We avoid this difficulties by imposing a general form of the dynamical equations and subordinating positivity to the time evolution. The time-evolution is based on the Moyal bracket, but has a significantly more general form. Despite of its generality, we show that any GPT under this time-evolution exhibits a generalized version of Ehrenfest's theorem.

Returning to the hydrogen atom, we first provide a spectral decomposition of the energy observable, study the ground state, determine the conserved quantities, and discuss the degeneracy of the spectrum. Then we consider two dynamical situations, first the interaction of the electron with an external electric field, and second the scattering of a charged particle in a Rutherford-type scenario. In the first situation, there are deviations to the quantum predictions, albeit quite small. For the scattering, we use a Green's function approach. We find that, in the far field, the scattering cannot be different from the classical and quantum case, thus yielding the Rutherford scattering in any generalized dynamical theory in phase-space.

%%%%%%%%%%%%%%%%%%%%%%%%%%%%%%%%%%%%%%%%%%%
\section{Operational theories in phase space}
We build on the results of Ref.~\cite{PlavalaKleinmann-oscillator}. In this framework, the state of a physical system is described by a real-valued phase-space function $\rho(\vec{q}, \vec{p})$. This function generalizes ensembles in statistical mechanics by allowing that $\rho$ can attain negative values for some regions in phase space. It follows that $\rho$ does no longer have an interpretation as a probability density over phase space. But, in analogy to Wigner functions in quantum theory, the marginals $\rho$ are required to have an interpretation as a probability density. That is,
\begin{equation}
\rho_q(\vec{q}) = \int_{\RR^3} \rho(\vec{q},\vec{p})\ddd{3} p,
\end{equation}
is a proper probability density for position obeying $\rho_q(\vec{q}) \geq 0$ for all $\vec{q}$ and $\int \rho_q(\vec{q})\ddd{3} q = 1$ and similarly for the momentum marginal $\rho_p(\vec{p})$. Note that we always consider a phase-space of dimension $3+3$, with generalizations to other dimensions being straightforward.

In addition to the marginal distributions, the state of a system determines the expectation value of any observable. An observable itself is described by the phase-space function known from classical mechanics, for example, $H(\vec{q},\vec{p})=\frac{\abs{\vec{p}}^2}{2 \mu}$ for the energy of a free nonrelativistic particle or $L_3(\vec{q},\vec{p})=q_1 p_2 - q_2 p_1$ for the $z$-component of the angular momentum vector. In order to be more concrete about the ``expectation value of an observable,'' we describe the outcome of a measurement of an observable $A$ as a random variable $\tilde{A}$. The expectation value of $\tilde{A}$ is then computed via
\begin{equation}
\braket{\tilde{A}}= \int_{\RR^6} A(\vec q,\vec{p})\rho(\vec q,\vec{p})\ddd{3} q \ddd{3} p= \braket{A,\rho},
\end{equation}
where we introduced the short-hand notation $\braket{A,\rho}$.

We mention that although the phase-space observables remain the same as in 
classical theory, their interpretation does change: In classical theory, 
$A(\vec{q},\vec{p})$ can be understood as the value of $A$ if the system is at 
the phase-space point $(\vec{q},\vec{p})$. This interpretation does not hold in 
the generalized framework, since $\rho$ is not a probability density over the 
phase space. Rather $A(\vec{q},\vec{p})$ should be seen as an abstract 
representation of an observable as a function over the phase space. As a 
consequence, the second moment of the random variable $\tilde{A}$, that is, the 
mean square, cannot be computed as an integral over $A^2(\vec{q},\vec{p})$. In 
other words in general we have $\braket{A^2, \rho} \neq 
\braket{(\tilde{A})^2}$.

We can deduce already a first implication of the phase-space formalism: It is a 
known fact stated in textbooks on quantum theory 
\cite{Cohen-TannoudjiDiuLaloe-quantum,Bohm-quantum,GottfriedYan-quantum} that 
the electron in the hydrogen atom does not collapse into the nucleus, because 
that would violate uncertainty relations. Hence the ground state energy of the 
quantum hydrogen atom must be finite. Here we derive the other implication, 
showing that uncertainty relations between position and momentum are necessary 
to prevent the hydrogen atom from collapsing in operational theories on phase 
space. For this, we compare the minimal value of the energy at any phase-space 
point $E_\mathrm{min} = \inf_{\vec{q},\vec{p}} H(\vec{q}, \vec{p})$ with the 
lowest energy reachable among all states of a given theory, $E_0 = \inf_\rho 
\braket{H,\rho}$. We say that a theory exhibits preparation uncertainty, if the 
set of states is restricted in such a way that $\rho(\vec{q},\vec{p})= 
\delta^{(3)}(\vec{q}_0-\vec{q})\delta^{(3)}(\vec{p}_0-\vec{p})$ is not a valid 
state for some phase-space point $(\vec{q}_0, \vec{p}_0)$. This kind of 
uncertainty is clearly a necessary precondition for $E_0 > E_\mathrm{min}$, 
thus proving our claim.

Finally, we are interested in the probability distribution of the random variable $\tilde{A}$. For this one invokes a ``spectral decomposition'' of $A(\vec{q},\vec{p})$ in the form of the phase-space spectral measure $g_A(I;\vec{q},\vec{p})$. The spectral measure yields then the probability distribution of $\tilde{A}$ via
\begin{equation}
\Pr[\tilde{A}\in I] = \braket{g_A(I),\rho}.
\end{equation}
Here, $I$ is any measurable subset of the range of $\tilde{A}$. The phase space spectral measure must satisfy two properties: $\Pr[\tilde{A}\in I]$ must be a probability distribution and the spectral measure must also reproduce the expectation value of $\tilde{A}$, that is, $\braket{\tilde{A}}= \int_\RR a \braket{g_A(a),\rho}\dd a$, see Ref.~\cite{PlavalaKleinmann-oscillator}. It is sufficient to require that $\int_\RR g_A(a) \dd a = 1$ and $\int_\RR a g_A(a; \vec{q}, \vec{p}) \dd a = A(\vec{q}, \vec{p})$ while the positivity condition $\braket{g_A(I),\rho}\ge 0$ must hold for all states.

Comparing the definition of the spectral measure with the initial definition of the marginals of $\rho$, we identify, the spectral measure of the position observable $\vec{Q}(\vec{q},\vec{p})=\vec{q}$ to be
\begin{equation}
g_{\vec{Q}}(I;\vec{q},\vec{p})= \int_I \delta^{(3)}(\vec{q}_0-\vec{q}) \ddd{3} q_0
\end{equation}
and analogically for momentum. For other observables there is no unique way to define the phase-space spectral measure. Already the phase-space spectral measure for the energy of the harmonic oscillator differs between classical and quantum theory and other theories for the harmonic oscillator that are neither classical nor quantum have been proposed \cite{PlavalaKleinmann-oscillator}.

%%%%%%%%%%%%%%%%%%%%%%%%%%%%%%%%%%%%%%%%%%%
\section{Time-evolution}
We consider now the time evolution of a system in the sense that the random variable $\tilde{A}$ associated to an observable $A$ evolves over time. For the moment we assume that this time dependence can be solely attributed to a change of the state of the system,
\begin{equation}
\label{eq:time-Schrodinger}
\Pr[\,\tilde{A}(t)\in I\,] = \braket{g_A(I), \rho(t)}.
\end{equation}
The time evolution of the state is Markovian if, for any $t>0$ and any $t_0$, $\rho(t+t_0)$ is fully determined by $\rho(t_0)$ and $t$, that is,
\begin{equation}
\rho(t+t_0)= \mathcal{R}(t)\rho(t_0)
\end{equation}
for some linear map $\mathcal{R}(t)$ mapping states to states. Under mild assumptions such a Markovian time evolution yields the equation of motion $\dot{\rho}(t_0) = \mathcal{G}\rho(t_0)$, where $\mathcal{G}= \dot{\mathcal{R}}(0)$ is the generator of time shifts and $\mathcal R(t)= \e^{t\mathcal G}$. Here, we assume that $\mathcal{R}$ and $\mathcal{G}$ are linear maps in order to achieve that convex combinations are preserved by the time evolution, that is, $\rho(t+t_0) = p\rho_1(t+t_0)+(1-p)\rho_2(t+t_0)$ for $0\le p\le 1$.

The situation that we described so far corresponds to the Schrödinger picture in quantum theory, since only the state evolves in time. We switch now to the Heisenberg picture and assume that, equivalently, all of the time-dependence of $\tilde{A}(t)$ can be accounted for by the spectral measure while the state remains constant,
\begin{equation}
\label{eq:time-Heisenberg}
\Pr[\, \tilde{A}(t)\in I\,]= \braket{g_A(I;t), \rho}.
\end{equation}
Equating the time derivatives of the right hand sides of Eq.~\eqref{eq:time-Heisenberg} and Eq.~\eqref{eq:time-Schrodinger}, gives us $\braket{\dot{g}_A(I;t),\rho}= \braket{g_A(I),\dot{\rho}(t)}= \braket{g_A(I), \mathcal{G} \rho} = \braket{\mathcal{G}^\dag g_A(I),\rho}$, and hence the adjoint map $\mathcal{G}^\dag$ is the generator for the time shifts of observables.

In summary, we arrived at the dynamical equations
\begin{equation}
\label{eq:time-dynG}
\dot{\rho} = \mathcal{G} \rho
\quad \text{and} \quad
\dot{g}_A(I) = \mathcal{G}^\dagger g_A(I),
\end{equation}
where $\mathcal{G}$ and $\mathcal{G}^\dagger$ are linear maps with the latter being the adjoint map of the former with respect to $\braket{\cdot,\cdot}$.
It should be noted that in this formulation, $\rho(t)$ is required to be a 
state of the theory, for all times $t>0$, and similarly, $g_A(I;t)$ is a 
spectral measure of the theory for all $t>0$.

In quantum and classical theory, the generator $G$ is directly computed from 
the Hamiltonian $H(q,p)$ of the system. Since such a connection is crucial for 
a phase-space formulation we use a construction of $\mathcal G$ from $H$ that 
generalizes from both, the classical and quantum case.
We recall that the time evolution in classical mechanics is given by the 
Liouville equation $\dot{\rho} = \{H,\rho\}$. Here,
\begin{equation}
\{f,g\} = \sum_{i=1}^3 \left(\dfrac{\partial f}{\partial q_i} \dfrac{\partial g}{\partial p_i} - \dfrac{\partial f}{\partial p_i} \dfrac{\partial g}{\partial q_i}\right) = f D_\omega g
\end{equation}
denotes the Poisson bracket and we define the operator $D_\omega$ which acts both to the left and to the right. In contrast, using the Wigner function formalism for quantum theory \cite{Wigner-WignerFunctions,Case-wignerFunctions}, the time-evolution of a Wigner function is given as $\dot{\rho} = \mb{H,\rho}_{\QT}$, where
\begin{equation}
\mb{f, g}_{\QT} = \dfrac{2}{\hbar} f \sin\left( \frac{\hbar}{2} D_\omega
\right) g
\end{equation}
is the Moyal bracket \cite{Groenewold-QM,Moyal-WignerFunctions}. As a generalization of both, the Poisson bracket and the Moyal bracket, we define the generalized Moyal bracket as
\begin{equation}
\label{eq:time-generalizedMoyal}
\mb{f,g} = \{f,g\} + \sum_{n=1}^\infty a_n \hbar^{2n} f D_\omega^{2n+1} g,
\end{equation}
where $a_n$ are some dimensionless constants. We recover the Poisson bracket 
for $a_n = 0$ while $a_n = \frac{(-1)^n}{2^{2n} (2n+1)!}$ yields back the Moyal 
bracket. Moreover we will assume that, just like in quantum theory, $\hbar$ is 
a small constant and so the terms proportional to $\hbar^{2n}$ are microscopic 
corrections to the classical time evolution obtained in the limit $\hbar \to 
0$. The coefficients cannot be changed arbitrarily without changing other 
aspects of the theory. For example, if one applies arbitrary classical 
time-evolutions to Wigner functions from quantum theory, one readily obtains 
negative probabilities, rendering the theory inconsistent. Nevertheless, the 
coefficients $a_n$ are experimentally accessible in anharmonic systems. 
Consider the Hamiltonian $H(t)=\frac{p^2}{2m} + \frac{m\omega^2}2 q^2 
+\lambda(t)\frac{m^2\omega^3}{2\hbar} q^4$ where $\lambda(t)$ is a function of 
time. Then $\lambda(t)\frac{m^2\omega^3}{2\hbar} q^4$ will contribute terms 
proportional to $a_1$ to the time-evolution. Thus one can determine $a_1$ by 
introducing the anharmonic term and measuring $p^2$ at a later time, see 
Appendix~\ref{appendix:timedep} for the explicit calculation.

We summarize key properties of the generalized Moyal bracket.
\begin{enumerate}[(i)]
\item $\mb{\cdot,\cdot}$ is linear in both arguments.
\item $\mb{\cdot,\cdot}$ is antisymmetric, $\mb{f,g}=-\mb{g,f}$.
\item\label{item:time-GMB-polynomials} For $g$ a polynomial on phase space of at most second order and $f$ an arbitrary phase-space function, $\mb{f,g}=\{f,g\}$.
\item $\Lie_f: g\mapsto \mb{f,g}$ is skew-adjoint, $\Lie_f^\dagger = -\Lie_f$.
\item $\mb{\cdot,\cdot}$ satisfies the Jacobi identity if and only if it coincides with the Moyal bracket (up to the value of $\hbar$).
\end{enumerate}
The proof of Property~(i)--(iv) can be found in Appendix~\ref{appendix:propertiesGMB}; Property~(v) was proved in Ref.~\cite{Fletcher-moyalBracket}. We mention that antisymmetry and skew-adjointness follow from the fact that only odd powers of $D_\omega$ occur in the generalized bracket; including even powers would either contradict Markovianity of the time-evolution, or the generator of time-translations would be different from $\Lie_H$, see Appendix~\ref{appendix:propertiesGMB}. Other properties that are familiar from the Poisson bracket are not fulfilled by the Moyal bracket and the generalized bracket. In particular, $\Lie_f$ does neither obey Leibniz's rule $\Lie_f(gh)\neq h\Lie_fg + g\Lie_f h$ nor the chain rule $\Lie_f g(h) \neq g'(h) \Lie_f h$. Also there exist functions $f,g$ such that $\{f,g\} = 0$ but $\mb{f,g} \neq 0$, or such that $\mb{f,g} = 0$ but $\{f,g\} \neq 0$, see Appendix~\ref{appendix:propertiesGMB}.

We use the generator $\Lie_H: f \mapsto \mb{H,f}$ in order to refine the dynamical equations in Eq.~\eqref{eq:time-dynG}, that is,
\begin{equation} \label{eq:time-timeEvolution}
\dot{\rho} = \Lie_H \rho
\quad \text{and} \quad
\dot{g}_A(I) = -\Lie_H g_A(I).
\end{equation}
For time-independent Hamiltonians, these equations have the solutions
\begin{equation}
\dot{\rho}(t) = \e^{t \Lie_H} \rho
\quad \text{and} \quad
\dot{g}_A(I;t) = e^{-t \Lie_H} g_A(I),
\end{equation}
respectively. Note that Property~\eqref{item:time-GMB-polynomials} may hold at time $t_0$ but not at later times. For example, in order to obtain the time-evolution of position $q_i$ in the Heisenberg picture we have to solve the equation $\dot{Q}_i(t) = - \Lie_H Q_i(t)$ with the initial condition $Q_i(0; \vec{q}, \vec{p}) = q_i$ and in general $\Lie_H Q_i(0) = \{H, Q_i(0)\}$ does not imply that the same holds at later times.

%%%%%%%%%%%%%%%%%%%%%%%%%%%%%%%%%%%%%%%%%%%
\section{Properties of the time-evolution}
For a Markovian time-evolution, the generator $\mathcal G$ must be constant in time. This is consistent with our dynamical equations, since the generalized bracket is antisymmetric and thus $\dot{H} = - \Lie_H H = 0$. In particular, the energy is conserved on average, that is, $\frac{d}{dt}\braket{\tilde H}= \braket{\dot H,\rho} = 0$. However, this does not yet imply that the probability distribution of $\tilde H$ is constant, since $\dot{H} = 0$ does not establish $\dot{g}_H(I) = 0$. This property is further distinct from the question whether an eigenstate, that is, a state with sharp energy distribution, is constant in time. We mention that in classical and quantum theory, the distribution of $\tilde H$ is constant in time. For classical theory this follows at once from $g_H(E)=\delta(E-H)$ and $\dot g_H(I)=-\{H,g_H(I)\}= 0$ and in quantum theory from $[\hat H,\Pi(I)]=0$ where $I\mapsto \Pi(I)$ is the projection-valued spectral measure of the Hamilton operator $\hat H$. Similarly, any eigenstate in quantum theory is stationary, while this is not generally the case in classical mechanics or operational theories.

While it is possible to construct a spectral measure such that $\Lie_H g_H(I) 
\neq 0$, one can prove that for discrete energy spectrum and for states such 
that $\Pr[\tilde{H} = E_n] \neq 0$ only for at most two $n$ we have 
$\frac{d}{dt} \Pr[\tilde{H} \in I] = 0$, see 
Appendix~\ref{appendix:propertiesGMB}, Proposition~\ref{prop:propertiesGMB-max2Energy}. This has two 
immediate consequences: First, the probability distribution of $\tilde{H}$ is 
constant in time if the system is in an eigenstate. Second, observing the 
effects of the time-evolution of $g_H(I)$ requires the existence of a state 
with contributions to more than two energy levels. Since these contributions 
must not be merely due to convex combinations, such states are in a generalized 
superposition state \cite{AubrunLamiPalazuelosPlavala-superposition}.

A hallmark of quantum theory is Ehrenfest's theorem, which shows that the mean values of position and momentum follow classical equations of motion. We can establish an analogous result in our general framework. For this we assume a Hamiltonian of the form
\begin{equation}
H(\vec{q}, \vec{p})=T(\vec{p}) + V(\vec{q}).
\end{equation}
Then the observables of position $\vec{Q}(\vec{q},\vec{p})= \vec{q}$, momentum $\vec{P}(\vec{q},\vec{p})=\vec{p}$, velocity $\vec v(\vec{q},\vec{p})= \vec{\nabla}_p T(\vec{p})$, and force $\vec F(\vec{q}, \vec{p})=- \vec{\nabla}_q V(\vec{q})$, where $\vec{\nabla}_q$ and $\vec{\nabla}_p$ are the gradients in position and momentum respectively, obey classical equations of motion on average,
\begin{equation} \label{eq:ehrenfest}
\dfrac{d}{dt}\braket {\tilde Q_i} = \braket {\tilde v_i}
\quad \text{and} \quad
\dfrac{d}{dt}\braket {\tilde P_j} = \braket{\tilde F_j}.
\end{equation}
In order to see this, we first observe that $-\Lie_H\vec{q}= \vec v$ and $-\Lie_H \vec{p}=\vec F$, due to property~\eqref{item:time-GMB-polynomials} of the generalized bracket. Hence,
\begin{equation}
\begin{split}
\dot{\vec{Q}}(t) &= -\Lie_H \e^{-t \Lie_H}\vec{q} = \e^{-t \Lie_H}(-\Lie_H \vec{q}) \\
&= \e^{-t \Lie_H}\vec v = \vec{v}(t),
\end{split}
\end{equation}
and using analogous steps, one finds $\dot{\vec{P}}(t)= \e^{-t \Lie_H}\vec{F} = \vec{F}(t)$. Taking the mean value of these relations yields Eq.~\eqref{eq:ehrenfest}.

Despite of this result, the evolving observables $\vec{Q}(t;\vec{q},\vec{p})$ and $\vec{P}(t;\vec{q},\vec{p})$ cannot be generally understood as phase-space trajectories \cite{SteuernagelKakofengitisRitter-wignerTrajectories,OlivaKakofengitisSteuernagel-wignerTrajectories}. In particular, Liouville's theorem $\frac{d}{dt}\rho(t;\vec{Q}(t;\vec{q},\vec{p}), \vec{P}(t;\vec{q},\vec{p}))= 0$, does not apply already for the Moyal bracket. Similarly, for an observable $A$, in general we have $A(t;\vec{q},\vec{p})\ne A(\vec{Q}(t;\vec{q},\vec{p});\vec{P}(t;\vec{q},\vec{p}))$.

An important practical tool in quantum theory is the interaction picture. This is applicable for Hamiltonians of the form $H(t)=H_0+H_1(t)$, where $H_0$ is the free Hamiltonian and $H_1(t)$ is an interaction term which may depend explicitly on time. In the interaction picture, observables evolve according to the free Hamiltonian $H_0$ while the time-evolution of the state also involves the interaction $H_1(t)$. Accordingly, the state and any observable $A$ in the interaction picture are defined as
\begin{equation}
\rho_\mathrm{int}(t)= \e^{-t \Lie_{H_0}}\rho(t)
\quad \text{and}\quad
A_\mathrm{int}(t)= \e^{-t \Lie_{H_0}}A,
\end{equation}
respectively, where $\rho(t)$ and $A$ are understood in the Schrödinger picture with respect to the full Hamiltonian $H(t)$, that is, $\dot{\rho}(t) = \mb{H(t),\rho(t)}$. Consequently, in the interaction picture we have $\braket{\tilde{A}(t)} = \braket{A_\mathrm{int}(t),\rho_\mathrm{int}(t)}$ and the dynamical equation
\begin{equation}
\label{eq:nonstationary-timeEvolution}
\begin{split}
\dot\rho_\mathrm{int}(t) &=
-\Lie_{H_0} e^{-t \Lie_{H_0}} \rho(t) + \e^{-t \Lie_{H_0}} \Lie_{H(t)} \rho(t) \\
&= \left( -\Lie_{H_0} + \e^{-t \Lie_{H_0}} \Lie_{H(t)} \e^{t \Lie_{H_0}} \right) \rho_\mathrm{int}(t) \\
&= \e^{-t \Lie_{H_0}} \Lie_{H_1(t)} \e^{t \Lie_{H_0}} \rho_\mathrm{int}(t).
\end{split}
\end{equation}
Interestingly, if the generalized bracket satisfies the Jacobi identity, then Eq.~\eqref{eq:nonstationary-timeEvolution} simplifies to $\dot{\rho}_\mathrm{int}(t) = \mb{H_{1,\mathrm{int}}(t),\rho_\mathrm{int}(t)}$ and thus the time-evolution of $\rho_\mathrm{int}$ is generated by $H_1(t)$ in the interaction picture, $H_{1,\mathrm{int}}(t) = \e^{-t\Lie_{H_0}}H_\mathrm{int}(t)$. Although this can be a significantly more convenient expression, we argue that it does not constitute a sufficient argument to conclude that the generalized bracket has to satisfy the Jacobi identity.

%%%%%%%%%%%%%%%%%%%%%%%%%%%%%%%%%%%%%%%%%%%
\section{The hydrogen atom}
We now turn to a concrete physical system, the nonrelativistic hydrogen atom with the Hamiltonian
\begin{equation}
H = \dfrac{\abs{\vec{p}}^2}{2 \mu} - \dfrac{\kappa}{\abs{\vec{q}}}.
\end{equation}
Here, $\mu$ is the reduced mass of the electron and $\kappa = \frac{e^2}{4\pi\epsilon_0}$ with $e$ the elementary charge. In this notation, the Bohr radius is given by $a_0 = \frac{\hbar^2}{\kappa\mu}$.

Let us first consider the conserved quantities. In classical and quantum theory, besides energy, also the angular momentum $\vec L=\vec q\times \vec p$ and the Runge--Lentz vector $\vec{A} = \vec{p} \times \vec{L} - \mu \kappa \vec{q}\,\abs{\vec{q}}^{-1}$ are constants in time. This turns out to be also the case for the generalized bracket, as can be verified in a straightforward calculation.

\begin{figure}
\includegraphics[width=\linewidth]{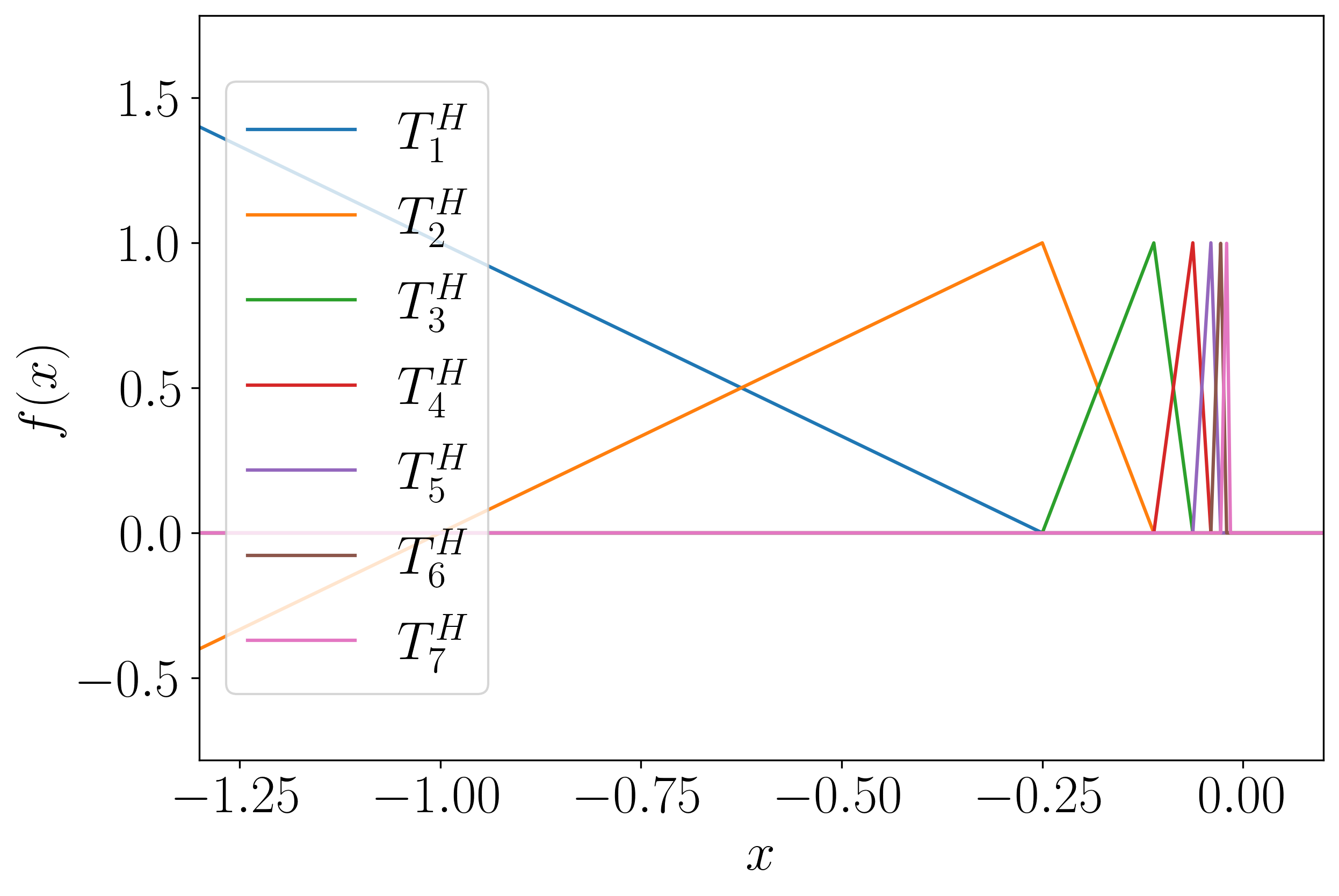}
\caption{The functions $T^H_n(x)$ used for the spectral measure of the energy of the hydrogen atom in Eq.~\eqref{eq:hydrogen-PSSM}. The functions are plotted for $n = 1,2, \ldots, 7$ and $x$ is in units of $\frac{\kappa}{2 a_0}$.}
\label{fig:hydrogen-TH}
\end{figure}

Quantum theory can explain the stability of the hydrogen atom as well as its 
spectrum which is in contrast to classical theory. But these two properties are 
by far not special for quantum theory per se. In the following we construct a 
phase space spectral measure $g_H(I)$ for the energy observable that is 
different from quantum theory, but has similar features. We first fix the 
energy spectrum to the one known from nonrelativistic quantum theory,
\begin{equation} \label{eq:hydrogen-En}
E_n = - \dfrac{\kappa}{2 a_0} \dfrac{1}{n^2}
\end{equation}
where $n = 1,2,\ldots$ is the principal quantum number. For any set of negative energies $I$ we then define the phase space spectral measure as
\begin{equation} \label{eq:hydrogen-PSSM}
g_H(I; \vec{q}, \vec{p}) = \sum_{n\colon E_n\in I} T^H_n(H(\vec{q},\vec{p}))
\end{equation}
where $T^H_n$ are the functions depicted in Figure~\ref{fig:hydrogen-TH}, see Appendix~\ref{appendix:hydrogen} for details. This measure is the uniquely specified by the following conditions:
\begin{enumerate}[(i)]
\item $g_H(I; \vec{q}, \vec{p})$ is piecewise linear function of $H(\vec{q}, \vec{p})$.
\item $g_H(E_n; \vec{q}, \vec{p}) \neq 0$ only if $H(\vec{q}, \vec{p}) \in [E_{n-1}, E_{n+1}]$.
\end{enumerate}
The second condition means that only the region of phase-space ``close'' to $H(\vec{q}, \vec{p}) = E_n$ contributes to the probability of observing the particle in the $n$\textsuperscript{th} energy level. This is a straightforward natural choice.

\begin{figure}
\begin{subfigure}{\linewidth}
\includegraphics[width=\linewidth]{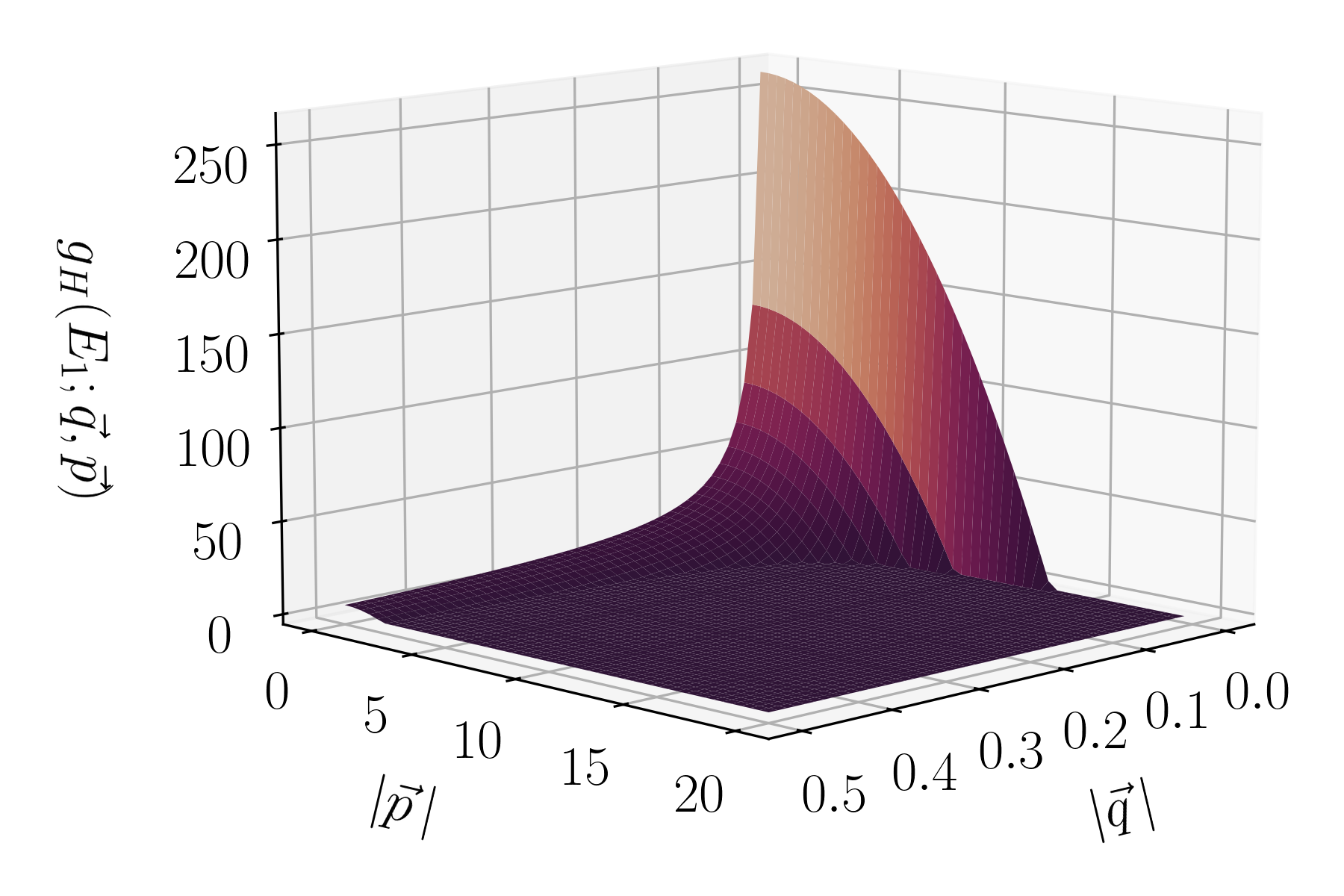}
\caption{Spectral function $g_H(E_1,\vec{q}, \vec{p})$.}
\end{subfigure}
\begin{subfigure}{\linewidth}
\includegraphics[width=\linewidth]{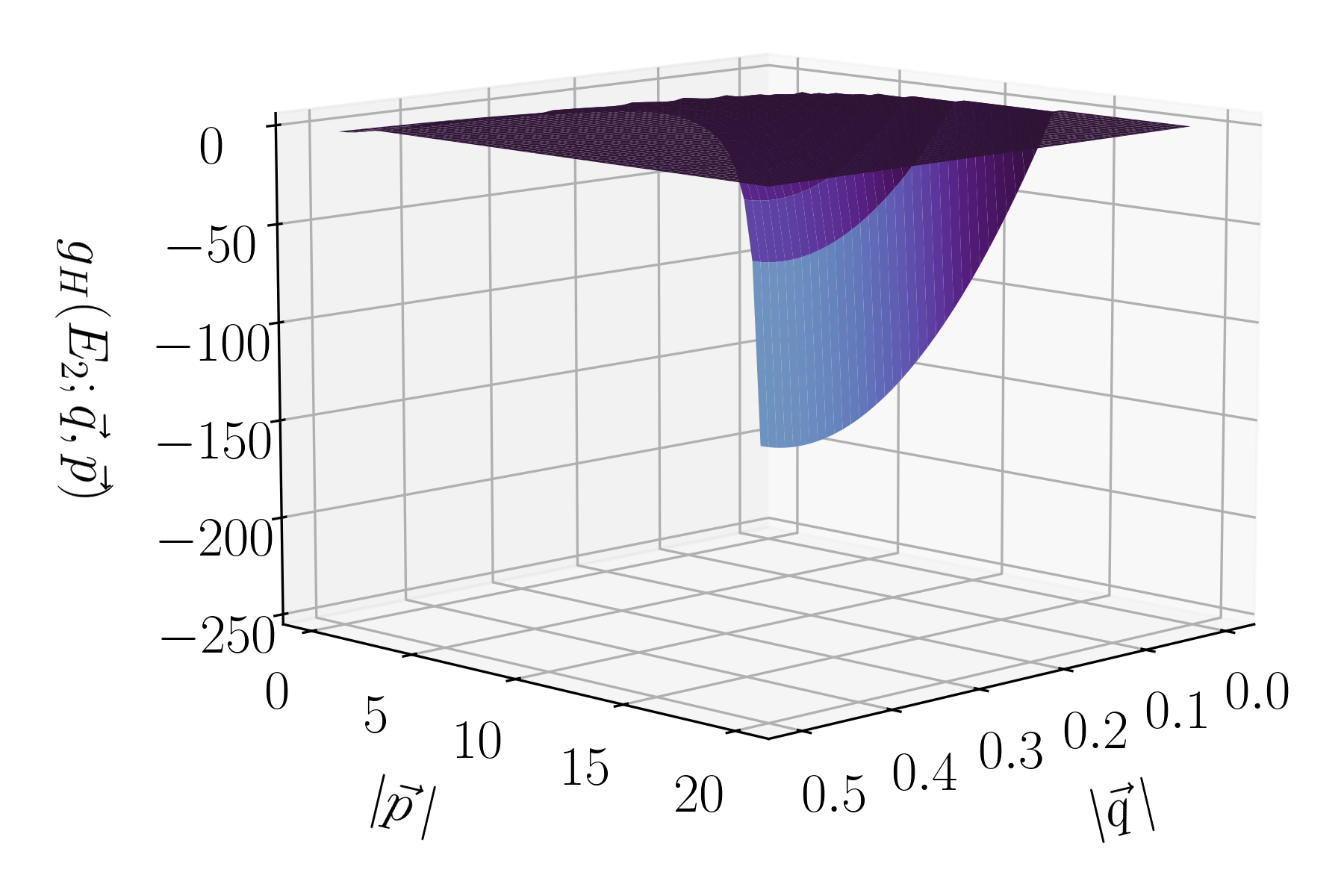}
\caption{Spectral function $g_H(E_2,\vec{q}, \vec{p})$.}
\end{subfigure}
\caption{Spectral functions for the first and second energy level. The Hamilton operator $H(\vec q,\vec p)$ is decomposed into spectral functions $g_H(E_k;\vec q,\vec p)$, such that $\braket{g_H(E_k)}$ is the probability to observe the energy $E_k$. The upper (lower) panel shows the spectral function for $E_1$ ($E_2$) as a function of $\abs{\vec{q}}$ and $\abs{\vec{p}}$ in units of the Bohr radius $a_0$ and $\hbar / a_0$, respectively. The normalization enforces that for small $\abs{\vec q}$ and $\abs{\vec p}$ both functions have to sum to unity. Here the spectral function for $E_2$ is strongly negative, eventually preventing the collapse of the atom.}
\label{fig:hydrogen-gHE}
\end{figure}

It follows from the definition of $T^H_2$ that $g_H(E_2; \vec{q}, \vec{p}) < 0$ everywhere in phase-space where $H(\vec{q}, \vec{p}) < E_1$. From this we can conclude that the hydrogen atom must be stable in the sense that no state can be supported solely where $H(\vec{q}, \vec{p}) \le E_1-\epsilon$ with $\epsilon>0$, see also Figure~\ref{fig:hydrogen-gHE}. Indeed, if $\supp(\rho)$ would be the support of such a state, then the continuity of $g_H(E_2;\vec q,\vec p)$ implies that $M = \sup\set{H(\vec q,\vec p) : (\vec q,\vec p) \in \supp(\rho) } < 0$ and hence
\begin{equation}
\begin{split}
\braket{g_H(E_2),\rho} &= \int_{\supp(\rho)} g_H(E_2;\vec q,\vec p)\rho(\vec q,\vec p) \ddd3q \ddd3p \\
&\le M \int_{\supp(\rho)} \rho(\vec q,\vec p) \ddd{3}q \ddd{3}p = M < 0,
\end{split}
\end{equation}
which is a contradiction with our requirement that all probabilities must be positive.

The measure $g_H(I; \vec{q}, \vec{p})$ is not stationary, $\Lie_H g_H(I) \neq 0$, because of discontinuities in the derivatives of $g_H(I)$. But we have $\{H, g_H(I)\} = 0$, so the measure is stationary if we approximate the time-evolution by the Poisson bracket, up to order $\hbar^2$. In a similar way one can construct the observable of angular momentum. For the $i$th component of $\vec{L}$ we let
\begin{equation}
g_{L_i}(I; \vec{q}, \vec{p}) = \sum_{m \colon m\hbar \in I} T^{L_i}_m(L_i(\vec{q}, \vec{p}))
\end{equation}
where $T^{L_i}_m$ are the functions depicted in Figure~\ref{fig:hydrogen-TLz} and defined in Appendix~\ref{appendix:hydrogen}.

\begin{figure}
\includegraphics[width=\linewidth]{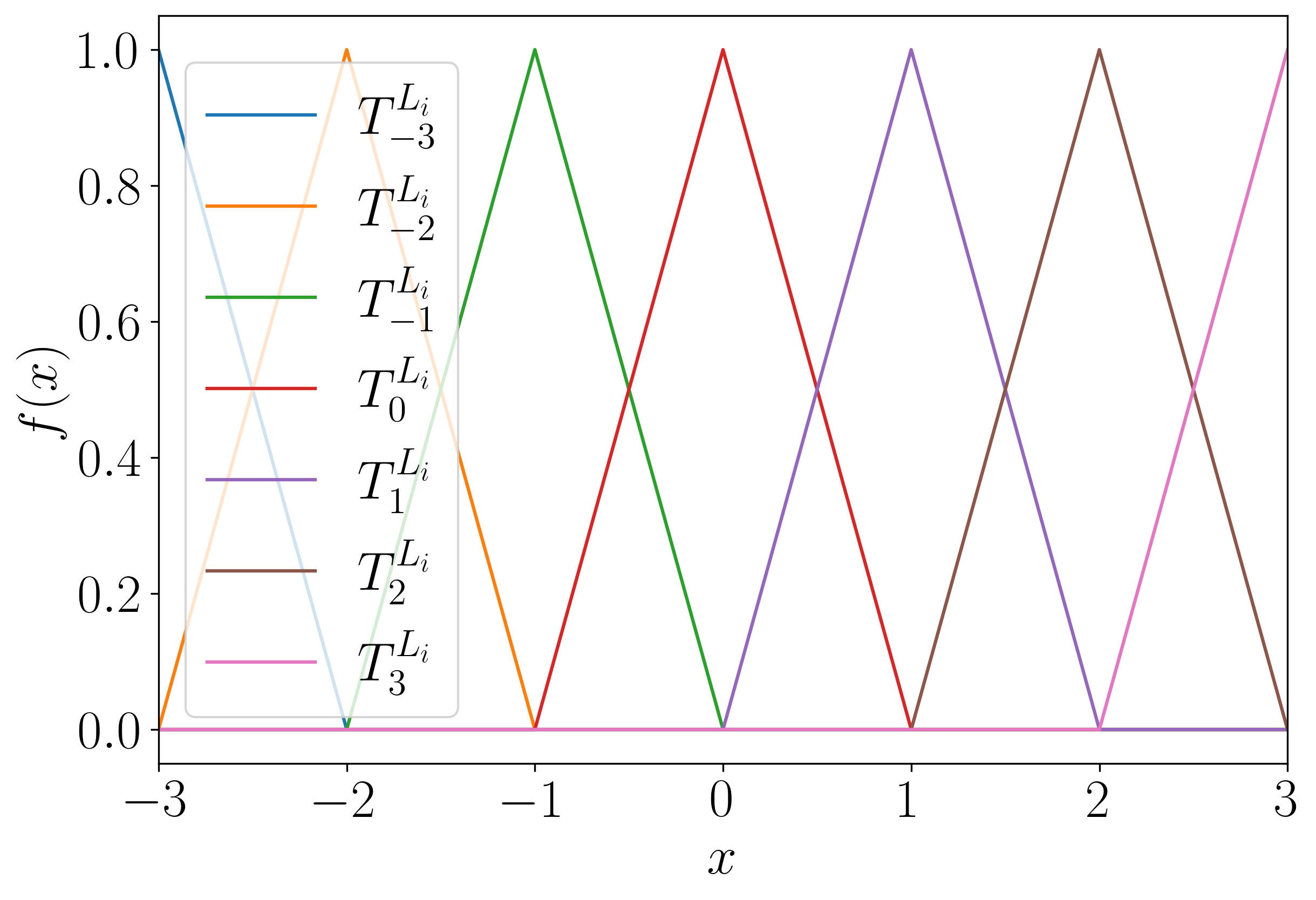}
\caption{The functions $T_m^{L_i}$ used in the definition of the phase space spectral measure of the angular momentum. The functions are plotted for $m = -3, \ldots, 3$ and $x$ is in units of $\hbar$.}
\label{fig:hydrogen-TLz}
\end{figure}

For possible states of the hydrogen atom we consider the Gaussian distribution
\begin{equation} \label{eq:hydrogen-state-Gaussian}
\rho_G(\vec{q},\vec{p})= \dfrac{1}{(2\pi)^3 \sigma_q^3 \sigma_p^3} \e^{-\frac{\abs{\vec{q}}^2}{2\sigma_q^2}-\frac{\abs{\vec{p}}^2}{2\sigma_p^2}}.
\end{equation}
Not all choices of $\sigma_p$ and $\sigma_q$ give a valid state, since $\braket{g_H(E_2),\rho_G}$ can be negative, in particular if $\sigma_q$ is too small. One can compute the value of $\braket{g_H(E_2),\rho_G}$ numerically for different pairs of $\sigma_q$ and $\sigma_p$ to find a region where $\rho_G$ gives positive probabilities, see Figure~\ref{fig:hydrogen-state-UR}. Note that we verify the positivity condition only at time $t = 0$. In principle we should verify whether $\braket{g_H(E_n),\rho(t)}$ is nonnegative for all later times. For the Poisson bracket, this is not the case because Leibniz rule renders the spectral measure to be constant in time by virtue of the Leibniz rule. It follows that $\braket{g_H(E_n),\rho}<0$ at $t>0$ must be at least of order $\hbar^2$. The same consideration also holds for the positivity of the marginals $\rho_q$ and $\rho_p$.

For $\sigma_p \to 0$ and $\sigma_q=\sigma_{\gnd} \approx 1.59577048804 \, a_0$ the state $\rho_G$ becomes
\begin{equation} \label{eq:hydrogen-state-deltaP}
\rho_{\gnd} (\vec{q},\vec{p})= \dfrac{1}{(2\pi)^{\frac{3}{2}} \sigma_{\gnd}^3} \e^{-\frac{\abs{\vec{q}}^2}{2\sigma_{\gnd}^2}} \delta^{(3)}(\vec{p}).
\end{equation}
This state is is a good approximation of a ground state, having $\braket{g_H(E_2),\rho_{\gnd}}=0$ and $\braket{\tilde H}\approx (1-10^{-6})E_1$ and subsequently we use it as if it was a proper ground state. We mention that $\rho_\gnd$ is clearly incompatible with quantum theory because the preparation uncertainty relation of position and momentum is violated.

For the ground state one obtains the most probable distance from the center at 
$t=0$ as $\sqrt{2} \sigma_Z \approx 2.26a_0$. But the state is not stationary 
since already $\{H,\rho_{\gnd}\}\ne 0$ and, for example, the most probable 
distance changes with time. However, the energy distribution $\Pr[\tilde{H} \in 
I]$ is constant in time since the state is an eigenstate, see 
Appendix~\ref{appendix:propertiesGMB}, 
Proposition~\ref{prop:propertiesGMB-max2Energy}. The consistency of the model 
can be extended beyond the ground state, by considering any nonnegative state 
$\rho(\vec q,\vec p)\ge 0$ with $\braket{g_H(E_2),\rho}\ge 0$ and for the 
time-evolution given by the Poisson bracket, that is, all $a_i=0$. Since 
$\{H,g_H(E_n)\}=0$, for any such state $\Pr[\tilde{H} \in I]$ is constant in time and remains  
positive for all other nonnegative spectral measures, since the classical 
time-evolution preserves positivity of phase-space functions. This is the case 
for rotations $\exp(\mathcal L_{\vec L\cdot \vec v})$, due to $\{\vec L, 
g_H(E_n)\}=\vec 0$, but is generally not true for other transformations, such as for time evolution given by harmonic potential.

\begin{figure}
\includegraphics[width=\linewidth]{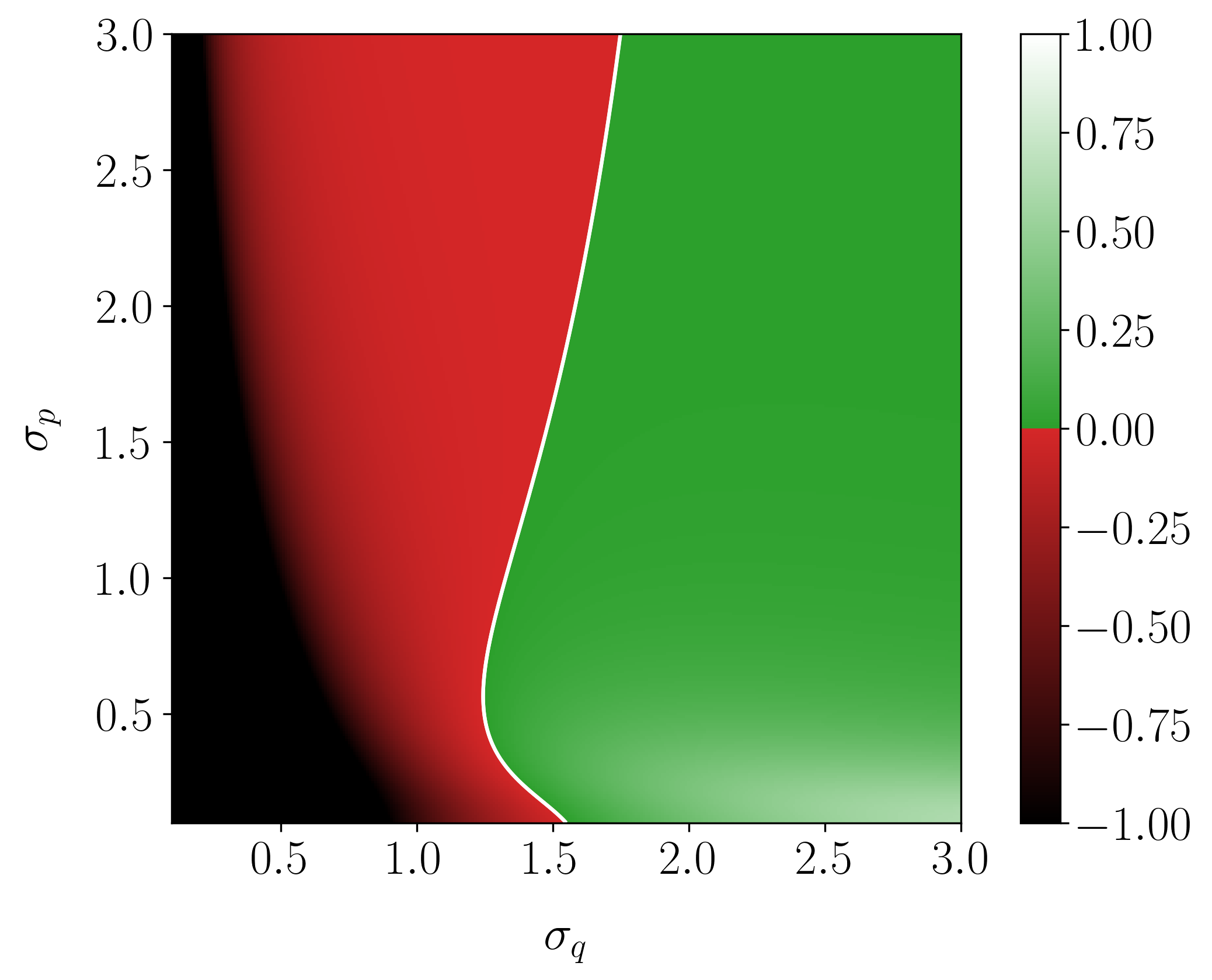}
\caption{Value of $\braket{g_H(E_2),\rho_G}$ for the family of states $\rho_G$ given by Eq.~\eqref{eq:hydrogen-state-Gaussian}. For a valid state this value has to be nonnegative. The positivity of $\braket{g_H(E_2),\rho_G}$ is significantly influenced by $\sigma_q$, therefore sufficient position uncertainty is necessary. $\sigma_q$ is in units of $a_0$ and $\sigma_p$ is in units of $\hbar/a_0$. The numerical uncertainty of the calculation is equal to the width of the white line separating the positive and negative region.}
\label{fig:hydrogen-state-UR}
\end{figure}

%%%%%%%%%%%%%%%%%%%%%%%%%%%%%%%%%%%%%%%%%%%
\section{External magnetic field}
We briefly treat the influence of an external magnetic field to our model of a hydrogen atom in our model of a hydrogen atom. For simplicity we assume that the electron is spinless and so the external magnetic field only interacts with the angular momentum. Hence, for a homogeneous magnetic field in $z$-direction, the new Hamiltonian is
\begin{equation}
H_B(\vec{q}, \vec{p}) = H(\vec{q}, \vec{p}) + \frac{\mu_B}{\hbar} B L_3(\vec{q}, \vec{p}),
\end{equation}
where $\mu_B$ denotes the Bohr magneton. We construct the phase space spectral measure for $H_B(\vec{q}, \vec{p})$ using a product of functions. In quantum theory we would get similar result using the Moyal product \cite{Moyal-WignerFunctions,Fairlie-starProduct,Zachos-starProduct}, but for our purposes, the ordinary point-wise product is sufficient. We get
\begin{equation}\label{eq:magnetic-pssmhb}
g_{H_B} (I) = \sum_{n,m\colon E_{n,m}\in I} T_n^H(H) T_m^{L_3}(L_3)
\end{equation}
with $E_{n,m}=E_n+\mu_B Bm$ and $m$ the magnetic quantum number. Thus we get a splitting of the energy levels due to the external magnetic field.

At first sight it may seem that there is no restriction of the magnetic number $m$, quite in contrast to quantum theory where $\abs{m}<n$ holds. However, certain combinations of quantum numbers cannot occur according to Eq.~\eqref{eq:magnetic-pssmhb}, simply because $T_n^H(H)$ and $T_m^{L_3}(L_3)$ have disjoint support in phase space. Using this we find that $\Pr[\, \tilde H_B=E_{n,m}\,] > 0$ only if $\abs{m} \leq 2(n+1)$, see Appendix~\ref{appendix:hydrogen} for a proof.

%%%%%%%%%%%%%%%%%%%%%%%%%%%%%%%%%%%%%%%%%%%
\section{Non-stationary external electric field}
We now investigate the case when the atom is perturbed by a non-stationary external electric field, in particular, by an external electromagnetic wave. We take into account only the electric field corresponding to the electromagnetic wave and we assume that the wavelength of the electromagnetic wave is significantly larger than the size of the atom. Thus the new Hamiltonian is
\begin{equation}
H_E(\vec{q}, \vec{p}, t) = H(\vec{q}, \vec{p}) + H_e(\vec{q}, \vec{p}, t)
\end{equation}
where
\begin{equation}
H_e(\vec{q}, \vec{p}, t) = - 2 e E \sin (\omega t) q_3,
\end{equation}
assuming that the electric field is oriented in the $z$-direction.

We proceed in the interaction picture as discussed above. Expanding the exponentials in Eq.~\eqref{eq:nonstationary-timeEvolution} we get
\begin{equation}
\dfrac{d}{dt} \rho_\mathrm{int} = \Lie_{H_e} \rho_\mathrm{int} + t ( \Lie_{H_e} \Lie_{H} - \Lie_{H} \Lie_{H_e}) \rho_\mathrm{int} + \dotsm,
\end{equation}
where we omitted terms which are second order in $t \Lie_H$ and higher. One finds $\Lie_{H_e} \Lie_{H} - \Lie_{H} \Lie_{H_e} = \Lie_{G}$ with
\begin{equation}
G = - \dfrac{2eE}{\mu} \sin(\omega t) p_3,
\end{equation}
see Appendix~\ref{appendix:electric}. Thus we get the approximation
\begin{equation}
\dfrac{d}{dt} \rho_\mathrm{int} \approx \Lie_{(H_e + t G)} \rho_\mathrm{int}.
\label{eq:nonstationary-finalTimeEvolution}
\end{equation}
Since the effective Hamiltonian $H_{\eff} = H_e + t G$ is only a linear function of position and momentum, the generalized bracket reduces to the Poisson bracket. Then $\rho(t;\vec q,\vec p)=\rho(0;\vec q+\vec s(t);\vec p+\vec u(t))$ is a solution of Eq.~\eqref{eq:nonstationary-finalTimeEvolution} for
\begin{equation}
\begin{split}
\vec s(t)&=\int_0^t \{ H_{\eff}(\tau),\vec q\}\dd\tau\\&
=\dfrac{2 e E}{\mu \omega^2} ( \sin(\omega t) - \omega t \cos(\omega t) )\vec e_3,\\
\vec u(t)&=\int_0^t \{H_{\eff}(\tau),\vec p\}\dd\tau=\dfrac{2eE}{\omega}(\cos(\omega t)- 1)\vec e_3.
\end{split}
\end{equation}
It is now straightforward to compute $\Pr[\tilde{H}(t) = E_n]$ numerically. Hereby the phase space spectral measure in the interaction picture depends on time, however this introduces only corrections of the order $\hbar^2$, which we neglect. The results of our computations are plotted in Figure~\ref{fig:nonstatioanry-compareQT} for $\omega = (E_2 - E_1)/\hbar$ and $\rho_{\gnd}$ as initial state. Importantly, we see that there is a non-zero probability of exciting the atom to the second energy level.

\begin{figure}
\includegraphics[width=\linewidth]{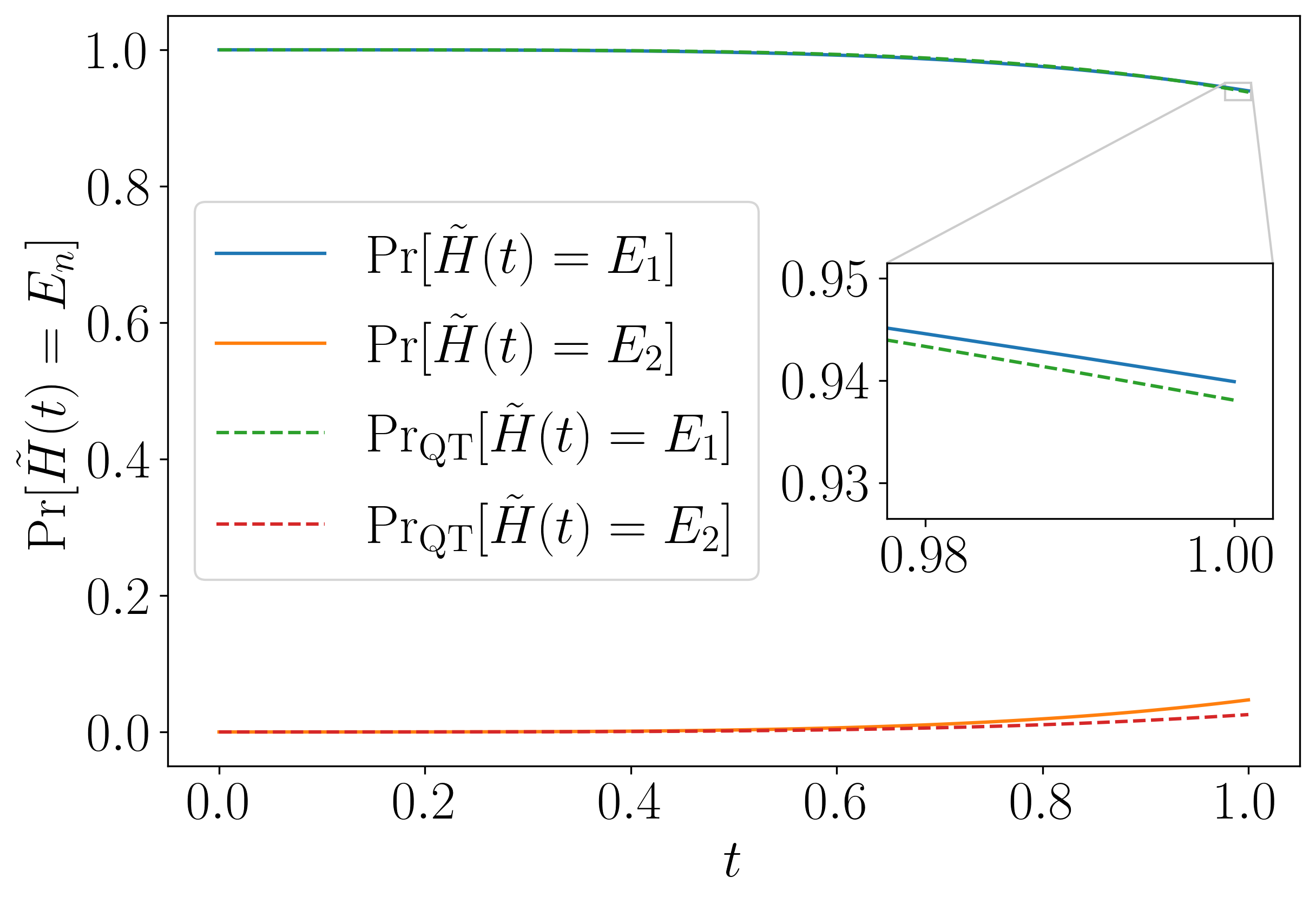}
\caption{Time evolution of the probabilities $\Pr[\tilde{H}(t) = E_1]$ and $\Pr[\tilde{H}(t) = E_2]$ for the initial state $\rho_{\gnd}$ of our model of the hydrogen atom and of the probabilities $\Pr_\QT[\tilde{H}(t) = E_1]$ and $\Pr_\QT[\tilde{H}(t) = E_2]$ for initial state $\psi_{100}$ of the quantum hydrogen atom, both in the same approximation. Notice that $\Pr[\tilde{H}(t) = E_1] \approx \Pr_\QT[\tilde{H}(t) = E_1]$ up to the numerical precision of our calculations. Here $\omega = (E_2 - E_1)/\hbar$ and $eE = \mu a_0 \omega^2$ and $t$ is in units of $\frac{\hbar}{\abs{E_1}}$.}
\label{fig:nonstatioanry-compareQT}
\end{figure}

For comparison, we also compute the transition probability $\Pr_\QT[\tilde{H}(t) = E_n]$ in quantum theory. We use the same approximation and assume as initial state $\rho$ the ground state of the hydrogen atom with the wave function $\psi_{100}$. The Schrödinger equation with Hamiltonian $\hat H_{\eff}$ yields
\begin{equation}
\psi_{100}(t;\vec x)= \e^{i\phi(t)}\e^{-\frac i\hbar \vec x\cdot \vec u(t)} \psi_{100}(0;\vec x+\vec s(t)),
\end{equation}
where $\phi(t)$ is an irrelevant phase. The results are plotted in Figure~\ref{fig:nonstatioanry-compareQT}. We see that also in quantum theory there is a nonzero probability of exciting the atom in our approximation and $\Pr[\tilde{H}(t) = E_1] \approx \Pr_\QT[\tilde{H}(t) = E_1]$ up to the numerical precision of our calculations. We finally mention that the excitation of the first energy level is of different origin than Rabi oscillations in quantum optics. This is because we consider here the limit of short times while in quantum optics one usually applies the rotating-wave approximation which disregards fast oscillations.

%%%%%%%%%%%%%%%%%%%%%%%%%%%%%%%%%%%%%%%%%%%
\section{Scattering theory}
In both classical and quantum theory the scattering of charged particles on a Coulomb potential leads to the same differential cross section given by the Rutherford formula \cite{Rutherford-scattering}
\begin{equation}
\dfrac{d \sigma}{d \Omega} = \dfrac{\kappa^2 \mu^2}{4 p_0^4} \dfrac{1}{\sin^4(\vartheta / 2)}
\end{equation}
where $p_0$ is the momentum of the incoming particles and $\vartheta$ is the 
scattering angle. In this section we show that this result holds in all 
operational theories of the hydrogen atom where the time-evolution is described 
by the generalized Moyal bracket \eqref{eq:time-generalizedMoyal}. Moreover 
this result is independent of the phase space spectral measure of the energy 
observable, since we only use the Hamiltonian as the generator of 
time-evolution.

The Wigner function representation was used before to investigate the scattering \cite{Remler-WignerFunctionsScattering,CarruthersZachariasen-WignerFunctionsScattering,KarlovetsSerbo-WignerFunctionsScattering} and our approach is in particular based on Ref.~\cite{CarruthersZachariasen-WignerFunctionsScattering}. We assume that in the asymptotic past, $t\to -\infty$, the scattering particles have a uniform spacial density $\nu$ and a fixed momentum $\vec p_0$. Dropping the normalization condition of the state, we write
\begin{equation}
\rho_{\ini}(\vec{q}, \vec{p})= \lim_{t\to-\infty}\rho(t;\vec q,\vec p)= \nu\delta^{(3)} (\vec{p} - \vec{p}_0).
\end{equation}
Note that this is the same initial condition as one uses in quantum scattering theory. The particle density at later times is given by
\begin{equation}
\label{eq:scattering-density}
D(t;\vec{q}) = \int_{\RR^3} \rho(t; \vec{q}, \vec{p}) \ddd{3} p,
\end{equation}
and for computing the cross-section we aim to obtain this density in the far field for the asymptotic future, that is, for $t\to +\infty$ and $\abs{\vec q}\to \infty$.

Using a Green's functions approach \cite{CarruthersZachariasen-WignerFunctionsScattering}, the formal solution of the dynamical equations in Eq.~\eqref{eq:time-timeEvolution} is given by
\begin{widetext}
\begin{equation}
\label{eq:scattering-solution}
\rho(t; \vec{q}, \vec{p}) = \rho_{\ini}(t; \vec{q}, \vec{p}) + \int\limits_{\RR^3} \int\limits_{-\infty}^t K(\vec{q} - \dfrac{\vec{p}}{\mu}(t - \tau), \vec{p}, \vec{p'}) \rho(\tau; \vec{q} - \dfrac{\vec{p}}{\mu}(t - \tau), \vec{p'}) \dd \tau \ddd{3} p'
\end{equation}
\end{widetext}
where
\begin{equation}
\label{eq:scattering-K}
K(\vec{q}, \vec{p}, \vec{p'}) = \mb{ V(\vec{q}), \delta^{(3)}_p(\vec{p} - \vec{p'})},
\end{equation}
see Appendix~\ref{appendix:scattering} for the full derivation. The solution can be found in a perturbative manner via $V(\vec{q}) \mapsto \lambda V(\vec{q})$ and the expansion
\begin{equation} \label{eq:scattering-rhoSeries}
\rho(t; \vec{q}, \vec{p}) = \sum_{n=0}^\infty\sum_{k=0}^\infty \hbar^{2n} \lambda^k \rho_{n,k}(t; \vec{q}, \vec{p}).
\end{equation}
by comparing coefficients in $\hbar$ and $\lambda$ in Eq.~\eqref{eq:scattering-K}.

Using these techniques, we show in Appendix~\ref{appendix:scattering} that
\begin{equation}
D_{n,k}(t;\vec{q}) = \int_{\RR^3} \rho_{n,k}(t;\vec q,\vec p)\ddd{3}p = \dfrac{f_{n,k}(t;\vartheta)}{\abs{\vec{q}}^{2n+k}},
\end{equation}
for some functions $f_{k,n}(t;\vartheta)$. The only terms that can contribute to the far field differential cross section are now for $n = 0$ and $k = 0, 1, 2$ and $n = 1$ and $k = 0$. The terms with $n = 0$ are the classical terms which are obtained by replacing the generalized bracket in Eq.~\eqref{eq:scattering-K} by the Poisson bracket and hence they give us the same prediction for the differential cross section as classical theory. We never get terms of the order $\hbar^{2} \lambda^0$ on the right-hand-side of Eq.~\eqref{eq:scattering-solution} because the terms of order $\lambda^0$ are the terms that do not include the potential but only the initial state $\rho_{\ini}$. Thus the differential cross section must be given by Rutherford's formula in all operational theories where the time-evolution is given by the generalized Moyal bracket.

\section{Conclusions}
We constructed a toy model for the hydrogen, which does not fall into the 
formalism of quantum theory or classical theory and as such clearly does not 
satisfy the constraints from modern experimental data. But, as we showed here, 
this model is internally consistent and conceptually it is not obvious why the 
toy model is incorrect, other than that it is not a quantum model. Moreover the 
toy model is in accordance with experimental predictions and theoretical 
paradigms of early quantum theory: The collapse of the atom is prevented due to 
the uncertainty principle, the model features a discrete energy spectrum in 
accordance with experimental observations, the angular momentum is quantized as 
predicted by Bohr, perturbations by nonstationary electric field lead to 
excitations, and Rutherford's formula for scattering cross-section holds. 
Moreover there is a meaningful classical limit: $\hbar \to 0$ recovers the 
classical dynamics given by the Poisson bracket and localized particles distant 
enough from the center of the potential are possible within our toy model.

After having established a formal background for dynamical theories on phase space, a conceptually rather straightforward construction gives already our toy model. The model does not even remotely resemble quantum theory, but still makes precise, measurable predictions. Besides that this shows how little is known about the space of theories in which quantum theory resides as a special case, we also found evidence that quantum theory is a strikingly simple theory with curious mathematical coincidences. For example, since the Moyal bracket satisfies the Jacobi identity, the interaction picture is especially simple to handle. We found that although conservation of energy always holds on average, the distribution of the energy can change over time. In quantum theory and classical theory the energy distribution is constant roughly because in both theories $H$ and $H^2$ ``commute.''

Generally, while it clearly is possible to identify experiments that invalidate our toy model in favor of quantum theory, it is an upcoming theoretical challenge to find operational postulates that are obeyed by quantum theory but violated in the toy model. While such postulates are known for finite-dimensional systems, it is not straightforward to generalize them to continuous variable systems.

\begin{acknowledgments}
We acknowledge support from the Deutsche Forschungsgemeinschaft (DFG, German Research Foundation, project numbers 447948357 and 440958198), the Sino-German Center for Research Promotion (Project M-0294), the ERC (Consolidator Grant 683107/TempoQ), and the German Ministry of Education and Research (Project QuKuK, BMBF Grant No. 16KIS1618K).
M.P. is thankful for the financial support from the Alexander von Humboldt Foundation.
The OMNI cluster of the University of Siegen was used for the computations.
\end{acknowledgments}

\bibliography{citation}

\onecolumngrid
\appendix

\section{Experimental accessibility of the coefficients of the generalized bracket} \label{appendix:timedep}
The generalized bracket $\mb{\cdot,\cdot}$ is defined in term of dimensionless coefficients $a_n$, see Eq.~\eqref{eq:time-generalizedMoyal}. In quantum theory these coefficients take specific values, for example, $a_1=-\frac1{24}$. One may take the perspective that each of these coefficients is a constant of nature and therefore must be verified in an experiment. However, the coefficients are not accessible by measuring the spectrum of an observable, because this spectrum can be chosen freely, as we have seen for the hydrogen atom. But the coefficients do become experimentally accessible in a time-dependent potential. In order to obtain $a_1$, a possible way is to determine the variance cf the momentum for the time-dependent Hamiltonian
\begin{equation} \label{eq:timedep-anharmonic}
H(t)=\dfrac{p^2}{2m} + \dfrac{m\omega^2}2 q^2 +\lambda(t)\dfrac{m^2\omega^3}{2\hbar} q^4.
\end{equation}
Importantly, this Hamiltonian does not ``commute'' with itself at different times, that is, $\mb{H(t),H(t')}\ne 0$. Hence the formal solution of the dynamical equation for an observable $A$ is not simply given by $A(t+t_0)=\exp(-\Lie_Ht)A(t_0)$ but rather by the time-ordered exponential, which can be approximated as
\begin{equation}
A(t+t_0) \approx
(\mathrm{id}-\tfrac tn\Lie_{H(t_{n-1})})
(\mathrm{id}-\tfrac tn\Lie_{H(t_{n-2})}) \dotsm
(\mathrm{id}-\tfrac tn\Lie_{H(t_{0})})A(t_0),
\end{equation}
with $t_k=\frac knt+t_0$ and $n \in \NN$. For $A(t_0)=p^2$, the first dependence on $a_1$ occurs at the fifth step, with
\begin{equation}
(\mathrm{id}-\tfrac tn\Lie_{H(t_{4})}) \dotsm (\mathrm{id}-\tfrac tn\Lie_{H(t_{0})})p^2 = K +\frac{1728}{n^4} m^2\omega^6 \, a_1 \lambda(t_1)\lambda(t_4) q^2 t^4,
\end{equation}
assuming that $\lambda(t_0)=0$ and where $K$ is a term independent of $a_1$, that is, corresponding to the case $a_1 = 0$. Hence it is possible to determine $a_1$ by measuring $p^2$.

\begin{figure}
\includegraphics[width=.4\linewidth]{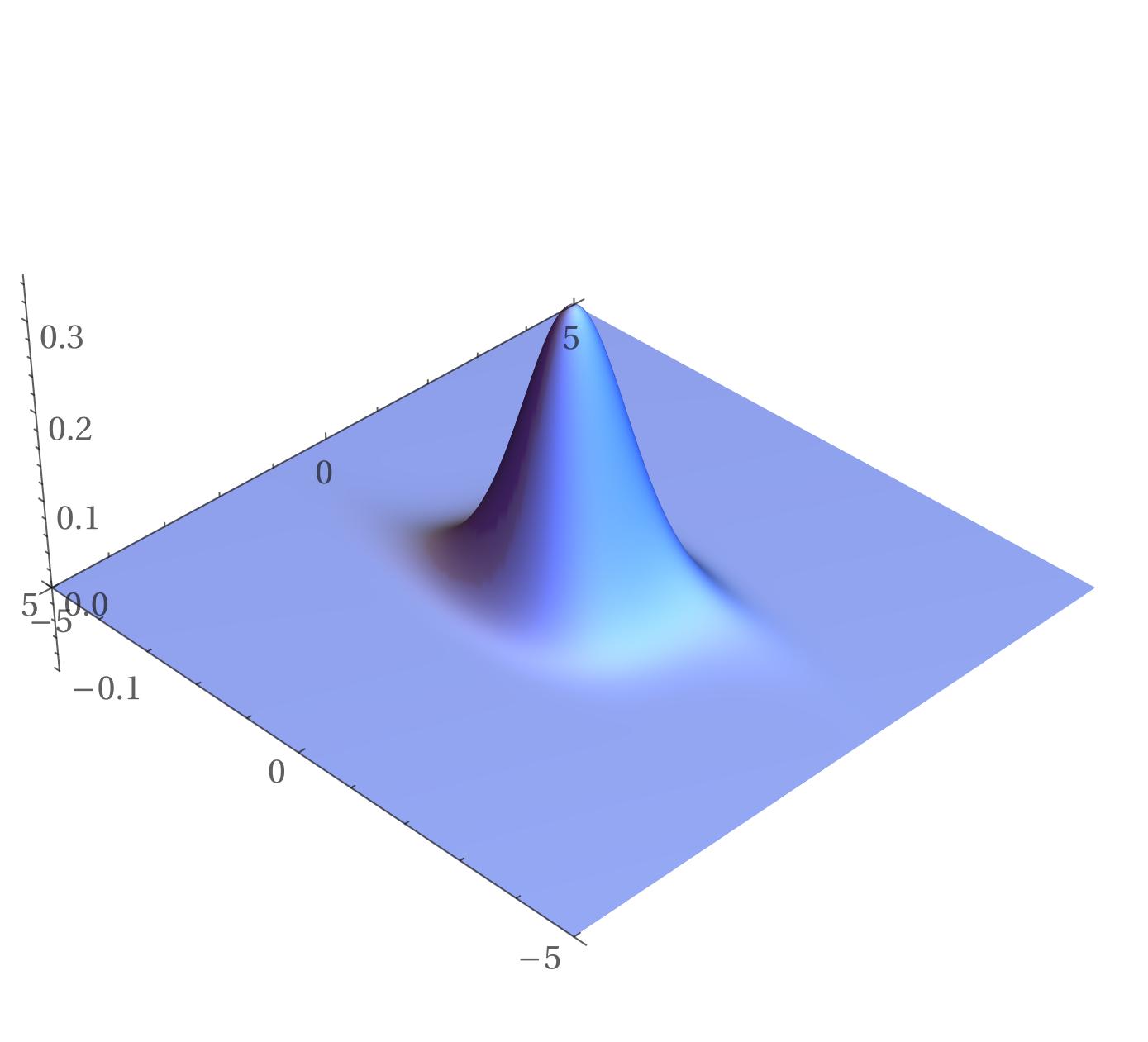}
\hspace{2em}
\includegraphics[width=.4\linewidth]{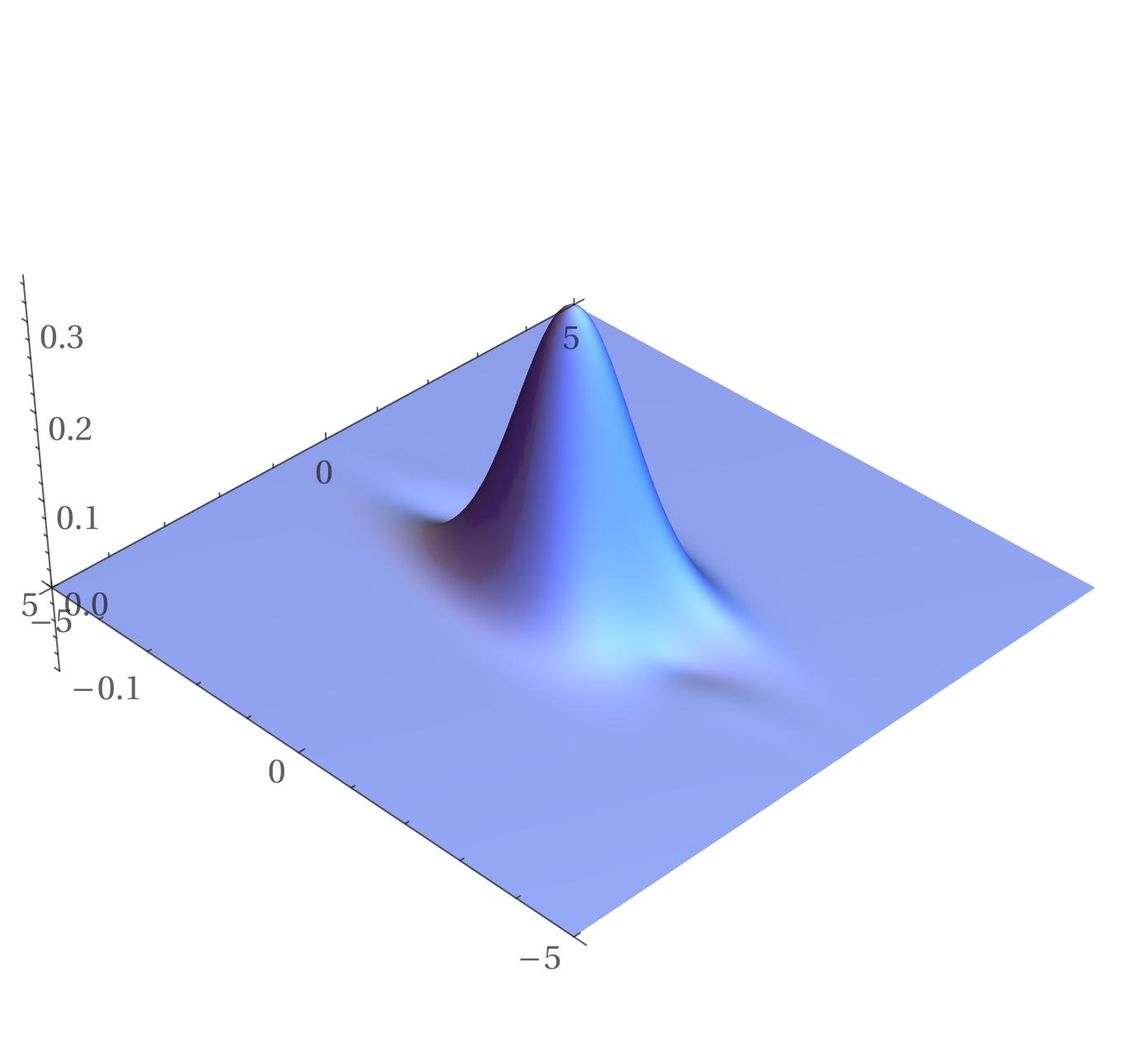}
\caption{Time-evolved state for a time-dependent anharmonic potential in Eq.~\eqref{eq:timedep-anharmonic}. The left panel is for the classical time evolution using the Poisson bracket, and the right panel is for the quantum time evolution using the Moyal bracket. In either case, the initial state is the Wigner function of the ground state of the quantum harmonic oscillator and the time-dependence $\lambda(t)$ of the anharmonic term is chosen in a wedge-like form. The left axes is $p$ in units of $\sqrt{m\omega\hbar}$ and right axis is $q$ in units of $\sqrt{\hbar/m\omega}$. The height is in units of $1/\hbar$.}
\label{fig:time-state}
\end{figure}

For a quantitative evaluation, we assume that $\lambda(t)$ has a wedge-like shape $\Lambda(\tau)=\max(0,1-\abs{1-2 \tau})$, more specifically, we choose $\lambda(t)=\frac13\Lambda(\frac{4\omega}\pi t)$. We assume that the initial state is the ground state of the quantum harmonic oscillator, $\rho_0 =\frac1{\pi\hbar}\exp(-H(0)/\frac{\hbar\omega}2)$ and due to its Gaussian shape, it is numerically more stable to solve the time-evolution for the state than for $p^2$. We use a numerical solver for the corresponding partial differential equation and obtain the time-evolved state at $t=\frac\pi{4\omega}$ as displayed in Figure~\ref{fig:time-state}, both for the classical ($a_1=0$) and quantum ($a_1=-\frac1{24})$ case. Since the mean value of the momentum vanishes, $\braket{\tilde p}=0$, for symmetry reasons and since we have $\tilde p^2=\widetilde{p^2}$, the variance of the momentum is simply given by $(\Delta\tilde p)^2=\braket{\widetilde{p^2}}$. Evaluating the corresponding phase-space integral on our numerical evolution, we find
\begin{equation}
\frac{(\Delta\tilde p)^2}{\hbar\omega m} \approx 0.6795+0.0823\, a_1
\end{equation}
for the variance at $t=\frac\pi{4\omega}$.

\section{Properties of the generalized Moyal bracket} \label{appendix:propertiesGMB}
In this section we prove several properties of the generalized Moyal bracket, defined as
\begin{equation}
\mb{f,g} = \sum_{n=0}^\infty a_n \hbar^{2n} f D_\omega^{2n+1} g,
\end{equation}
where $a_0=1$ and
\begin{equation}
D_\omega = \sum_{i=1}^N \dfrac{\overleftarrow{\partial}}{\partial q_i} \dfrac{\overrightarrow{\partial}}{\partial p_i} - \dfrac{\overleftarrow{\partial}}{\partial p_i} \dfrac{\overrightarrow{\partial}}{\partial q_i}.
\end{equation}

\begin{proposition}
The generalized Moyal bracket is linear, i.e., we have
\begin{align}
\mb{\alpha_1 f_1 + \alpha_2 f_2, g} & = \alpha_1 \mb{f_1, g} + \alpha_2 \mb{f_2, g}, \\
\mb{f, \beta_1 g_1 + \beta_2 g_2} & = \beta_1 \mb{f, g_1} + \beta_2 \mb{f, g_2}.
\end{align}
\end{proposition}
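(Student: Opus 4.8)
\emph{Proof plan.} The statement reduces to the observation that $D_\omega$ is a bilinear bidifferential operator and that iteration, multiplication by a fixed function, and summation all preserve linearity. First I would record that $D_\omega = \sum_{i=1}^N \bigl(\tfrac{\overleftarrow{\partial}}{\partial q_i}\tfrac{\overrightarrow{\partial}}{\partial p_i} - \tfrac{\overleftarrow{\partial}}{\partial p_i}\tfrac{\overrightarrow{\partial}}{\partial q_i}\bigr)$ is, by construction, linear in the function it acts on from the left and linear in the function it acts on from the right: each $\overleftarrow{\partial}$ is an ordinary partial derivative applied to the left argument $f$, each $\overrightarrow{\partial}$ a partial derivative applied to the right argument $g$, and the pointwise product of the two resulting functions is bilinear in $(f,g)$. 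Hence $f D_\omega g$ is bilinear.

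Next I would show that $f D_\omega^{2n+1} g$ is bilinear for every $n$. Expanding the $(2n+1)$-st power of the sum over $i$ via the multinomial theorem, $D_\omega^{2n+1}$ becomes a \emph{finite} sum of terms, each of which is a signed constant times a differential operator of order $2n+1$ acting only on the left argument multiplied pointwise by a differential operator of order $2n+1$ acting only on the right argument. Since every partial derivative is a linear operator and pointwise multiplication by the fixed function obtained by differentiating $g$ is linear in $f$ — and symmetrically in $g$ — each such term is separately linear in $f$ for fixed $g$ and in $g$ for fixed $f$. A finite sum of bilinear maps is bilinear, so $f D_\omega^{2n+1} g$ is bilinear; no induction is strictly needed, but one could equally prove it by induction on $n$ using the $n=0$ case.

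Finally, $\mb{f,g} = \sum_{n=0}^\infty a_n \hbar^{2n} f D_\omega^{2n+1} g$ is, on the space of phase-space functions on which the bracket is defined (e.g.\ polynomials, or Schwartz functions, where the series is read termwise), a sum of the bilinear maps $(f,g)\mapsto a_n\hbar^{2n} f D_\omega^{2n+1} g$; multiplying by the scalar $a_n\hbar^{2n}$ preserves bilinearity and a termwise sum of bilinear maps is bilinear. Evaluating this bilinearity on $\alpha_1 f_1 + \alpha_2 f_2$ in the first slot and on $\beta_1 g_1 + \beta_2 g_2$ in the second slot yields exactly the two displayed identities. There is no genuinely hard step here; the only point deserving a word of care is the interchange of the infinite sum over $n$ with the linear combination, which is harmless because a linear combination involves only finitely many functions and therefore commutes with the termwise sum irrespective of any convergence subtleties.
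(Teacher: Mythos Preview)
Your argument is correct and follows essentially the same route as the paper: both establish that $f D_\omega^{k} g$ is bilinear in $(f,g)$ for every $k$ and then pass to the series termwise. The only cosmetic difference is that the paper runs the bilinearity of $D_\omega^{k}$ by a one-step induction on $k$, whereas you expand the power directly via the multinomial theorem; you even note that induction would work equally well.
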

\begin{proof}
The proof follows by induction. Assume that
\begin{equation}
(f_1 + \alpha f_2) D_\omega^n g = f_1 D_\omega^n g + \alpha f_2 D_\omega^n g,
\end{equation}
then
\begin{align}
(f_1 + \alpha f_2) D_\omega^{n+1} g & = \sum_{i=1}^N \left( \dfrac{\partial (f_1 + \alpha f_2)}{\partial q_i} D_\omega^{n} \dfrac{\partial g}{\partial p_i} - \dfrac{\partial (f_1 + \alpha f_2)}{\partial p_i} D_\omega^{n} \dfrac{\partial g}{\partial q_i} \right) \\ &= \sum_{i=1}^N \left( \dfrac{\partial f_1}{\partial q_i} D_\omega^{n} \dfrac{\partial g}{\partial p_i} - \dfrac{\partial f_1}{\partial p_i} D_\omega^{n} \dfrac{\partial g}{\partial q_i} \right) + \alpha \sum_{i=1}^N \left( \dfrac{\partial f_2}{\partial q_i} D_\omega^{n} \dfrac{\partial g}{\partial p_i} - \dfrac{\partial f_2}{\partial p_i} D_\omega^{n} \dfrac{\partial g}{\partial q_i} \right) \\ &= f_1 D_\omega^{n+1} g + \alpha_2 f_2 D_\omega^{n+1} g.
\end{align}
Linearity of the generalized Moyal bracket follows from its definition. Proof of linearity in the second argument is analogical.
\end{proof}

In order to show that $\mb{\cdot,\cdot}$ is anti-symmetric we will need the following result.
\begin{proposition}
\label{prop:propertiesGMB-symAntisym}
Let $k \in \{0, 1, 2, \ldots \}$, then
\begin{align}
f D_\omega^{2k} g & = g D_\omega^{2k} f, \\
f D_\omega^{2k+1} g & = - g D_\omega^{2k+1} f.
\end{align}
\end{proposition}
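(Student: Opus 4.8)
The plan is to write the bidifferential operator $f D_\omega^{\,n} g$ in a closed form with explicit summation indices and then read off both identities at once from the antisymmetry of the symplectic form by a relabeling of the dummy indices.

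First I would collect the phase-space coordinates into a single tuple $x=(x_1,\dots,x_{2N})=(q_1,\dots,q_N,p_1,\dots,p_N)$ and record that $D_\omega=\sum_{a,b=1}^{2N} J^{ab}\,\overleftarrow{\partial}_{x_a}\overrightarrow{\partial}_{x_b}$, where $J$ is the constant matrix with $J^{(q_i)(p_i)}=1$, $J^{(p_i)(q_i)}=-1$, and all other entries zero, so that $J^{ba}=-J^{ab}$. The structural claim is then
\begin{equation}
f D_\omega^{\,n} g=\sum_{a_1,\dots,a_n=1}^{2N}\;\sum_{b_1,\dots,b_n=1}^{2N} J^{a_1 b_1}\cdots J^{a_n b_n}\,\bigl(\partial_{x_{a_1}}\!\cdots\partial_{x_{a_n}} f\bigr)\bigl(\partial_{x_{b_1}}\!\cdots\partial_{x_{b_n}} g\bigr).
\end{equation}
I would prove this by induction on $n$: the base case $n=1$ is the definition, and the step uses the recursion $f D_\omega^{\,n+1} g=\sum_{i}\bigl(\partial_{q_i} f\; D_\omega^{\,n}\,\partial_{p_i} g-\partial_{p_i} f\; D_\omega^{\,n}\,\partial_{q_i} g\bigr)=\sum_{a,b} J^{ab}\,(\partial_{x_a} f)\, D_\omega^{\,n}\,(\partial_{x_b} g)$ already employed in the proof of linearity above, into which one substitutes the induction hypothesis for the inner $D_\omega^{\,n}$ and then relabels.

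Granting the closed form, both identities follow immediately. In the expression for $g D_\omega^{\,n} f$ I would rename the dummy indices $a_i\leftrightarrow b_i$ for every $i=1,\dots,n$; this replaces each factor $J^{a_i b_i}$ by $J^{b_i a_i}=-J^{a_i b_i}$, contributing an overall factor $(-1)^n$, while the two derivative factors, being ordinary functions, may be multiplied in either order and are thus left unchanged. Hence $g D_\omega^{\,n} f=(-1)^n f D_\omega^{\,n} g$, and setting $n=2k$ and $n=2k+1$ gives $f D_\omega^{\,2k} g=g D_\omega^{\,2k} f$ and $f D_\omega^{\,2k+1} g=-\,g D_\omega^{\,2k+1} f$, respectively.

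The only genuine difficulty is the bookkeeping in the inductive step: one must check that raising the power of $D_\omega$ really amounts to prepending one outer index pair $(a_{n+1},b_{n+1})$ acting by $\overleftarrow{\partial}$ on $f$ and by $\overrightarrow{\partial}$ on the whole of $D_\omega^{\,n} g$, and that the mixed partials appearing on each side may be freely reordered. An alternative that avoids the multi-index algebra is to regard $D_\omega$ as a differential operator on $C^\infty\otimes C^\infty$ acting in two independent copies of the coordinates; writing $\tau$ for the flip of the two copies one verifies directly that $D_\omega\circ\tau=-\,\tau\circ D_\omega$, hence $D_\omega^{\,n}\circ\tau=(-1)^n\,\tau\circ D_\omega^{\,n}$, and applying this operator identity to $f\otimes g$ and then restricting to the diagonal, where $\tau$ acts trivially, yields $g D_\omega^{\,n} f=(-1)^n f D_\omega^{\,n} g$ at once.
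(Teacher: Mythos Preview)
Your proof is correct and takes a genuinely different route from the paper's. The paper argues by a two-step alternating induction: assuming $f D_\omega^{2k} g = g D_\omega^{2k} f$ it derives $f D_\omega^{2k+1} g = -g D_\omega^{2k+1} f$, and assuming $f D_\omega^{2k-1} g = -g D_\omega^{2k-1} f$ it derives $f D_\omega^{2k} g = g D_\omega^{2k} f$, each time by writing out one application of $D_\omega$ explicitly and invoking the induction hypothesis on the inner terms. You instead first establish the closed multi-index form $f D_\omega^{\,n} g = \sum J^{a_1b_1}\cdots J^{a_nb_n}(\partial_{a_1}\!\cdots\partial_{a_n}f)(\partial_{b_1}\!\cdots\partial_{b_n}g)$ and then obtain $g D_\omega^{\,n} f = (-1)^n f D_\omega^{\,n} g$ in one stroke from the antisymmetry of $J$. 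Your approach is more conceptual and yields the unified statement for all $n$ at once; the flip-operator variant you sketch is especially clean. The paper's approach, by contrast, avoids introducing the symplectic matrix and the multi-index bookkeeping, staying closer to the concrete $(q,p)$ notation used throughout. Both are of comparable length; your closed form has the side benefit that essentially the same expansion reappears later in the paper's scattering calculation, so it is not wasted effort.
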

\begin{proof}
The proof follows by induction: we will prove that if $f D_\omega^{2k} g = g D_\omega^{2k} f$ then $f D_\omega^{2k+1} g = -g D_\omega^{2k+1} f$ and that if $f D_\omega^{2k-1} g = -g D_\omega^{2k-1} f$, then $f D_\omega^{2k} g = g D_\omega^{2k} f$. The result then follows from $f D_\omega g = -g D_\omega f$ by alternating between the induction steps.

So assume that $f D_\omega^{2k} g = g D_\omega^{2k} f$, then we have
\begin{align}
f D_\omega^{2k+1} g & = \sum_{i=1}^N \left( \dfrac{\partial f}{\partial q_i} D_\omega^{2k} \dfrac{\partial g}{\partial p_i} - \dfrac{\partial f}{\partial p_i} D_\omega^{2k} \dfrac{\partial g}{\partial q_i} \right) \\ &= \sum_{i=1}^N \left( \dfrac{\partial g}{\partial p_i} D_\omega^{2k} \dfrac{\partial f}{\partial q_i} - \dfrac{\partial g}{\partial q_i} D_\omega^{2k} \dfrac{\partial f}{\partial p_i} \right) \\ &= - g D_\omega^{2k+1} f.
\end{align}
Assuming that $f D_\omega^{2k-1} g = -g D_\omega^{2k-1} f$, we get
\begin{align}
f D_\omega^{2k} g & = \sum_{i=1}^N \left( \dfrac{\partial f}{\partial q_i} D_\omega^{2k-1} \dfrac{\partial g}{\partial p_i} - \dfrac{\partial f}{\partial p_i} D_\omega^{2k-1} \dfrac{\partial g}{\partial q_i} \right) \\ &= \sum_{i=1}^N \left( - \dfrac{\partial g}{\partial p_i} D_\omega^{2k-1} \dfrac{\partial f}{\partial q_i} + \dfrac{\partial g}{\partial q_i} D_\omega^{2k-1} \dfrac{\partial f}{\partial p_i} \right) \\ &= g D_\omega^{2k} f.
\end{align}
\end{proof}

\begin{corollary}
$\mb{f,g} = -\mb{g,f}$
\end{corollary}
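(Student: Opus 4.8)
The plan is to reduce the claim immediately to Proposition~\ref{prop:propertiesGMB-symAntisym}. Recall that in the notation of this appendix the generalized Moyal bracket is $\mb{f,g} = \sum_{n=0}^\infty a_n \hbar^{2n} f D_\omega^{2n+1} g$ with $a_0 = 1$, so that only the odd powers $D_\omega^{2n+1}$ of the bidifferential operator $D_\omega$ enter. First I would invoke the odd-power identity from Proposition~\ref{prop:propertiesGMB-symAntisym}, namely $f D_\omega^{2k+1} g = - g D_\omega^{2k+1} f$, applied with $k = n$ to each summand. This turns the $n$th term $a_n \hbar^{2n} f D_\omega^{2n+1} g$ into $-a_n \hbar^{2n} g D_\omega^{2n+1} f$, with the coefficient $a_n \hbar^{2n}$ untouched.

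Summing over $n$ and using linearity of $\mb{\cdot,\cdot}$ (the first proposition of this appendix) then gives $\mb{f,g} = -\sum_{n=0}^\infty a_n \hbar^{2n} g D_\omega^{2n+1} f = -\mb{g,f}$, which is the assertion. The only point meriting a word of care is the interchange of the termwise sign flip with the (formal) infinite sum, but since the identity is applied term by term and no rearrangement of the series is needed, this raises no convergence issue beyond the one already implicit in the definition of $\mb{\cdot,\cdot}$. There is in fact no real obstacle here: the entire content of the corollary is carried by the odd case of Proposition~\ref{prop:propertiesGMB-symAntisym}, and the proof is a one-line consequence of it together with linearity.
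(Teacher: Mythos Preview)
Your proposal is correct and follows exactly the paper's approach: the paper's proof is the one-line remark that the result follows because $\mb{\cdot,\cdot}$ contains only odd powers of $D_\omega$, which is precisely your termwise application of the odd case of Proposition~\ref{prop:propertiesGMB-symAntisym}.
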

\begin{proof}
The result follows since $\mb{\cdot,\cdot}$ contains only odd powers of $D_\omega$.
\end{proof}

\begin{proposition}
Let $P_2$ be a polynomial of at most second order, then $\mb{f,P_2} = \{f, P_2\}$.
\end{proposition}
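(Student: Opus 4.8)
The plan is to reduce the claim to the observation that each correction term $a_n\hbar^{2n} f D_\omega^{2n+1} P_2$ with $n\ge 1$ vanishes identically, so that only the $n=0$ term survives, and that term is by definition the Poisson bracket $\{f,P_2\}$. The single fact that drives the argument is that $D_\omega^k$ distributes exactly $k$ derivatives onto each of its two arguments; applied to the second argument this means that whenever $k\ge 3$, a polynomial of degree at most two is annihilated.

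First I would establish the auxiliary statement that, for every $k\ge 0$, one can write $f D_\omega^k g = \sum c_{\alpha\beta}\,(\partial^\alpha f)(\partial^\beta g)$ as a finite sum over multi-indices $\alpha,\beta$ in the six phase-space variables with $\abs{\alpha}=\abs{\beta}=k$, where the constants $c_{\alpha\beta}$ do not depend on $f$ or $g$. This is proved by induction on $k$: the base case $k=1$ is just the definition of $D_\omega$, in which each of the two terms carries one derivative on $f$ and one on $g$; for the inductive step one uses $f D_\omega^{k+1} g = \sum_i \big[(\partial_{q_i} f)\, D_\omega^{k}(\partial_{p_i} g) - (\partial_{p_i} f)\, D_\omega^{k}(\partial_{q_i} g)\big]$ and applies the inductive hypothesis to each $D_\omega^{k}$, noting that prepending $\partial_{q_i}$ or $\partial_{p_i}$ raises the derivative order on each side by one. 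The only care required here is the bookkeeping of the $\overleftarrow{\partial}$ and $\overrightarrow{\partial}$ arrows, i.e. keeping track of which factor each derivative acts on; this is analogous to the manipulations already used in Proposition~\ref{prop:propertiesGMB-symAntisym}.

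Next I would set $g = P_2$. Since $P_2$ has degree at most two, $\partial^\beta P_2 = 0$ whenever $\abs{\beta}\ge 3$, so the auxiliary statement gives $f D_\omega^k P_2 = 0$ for every $k\ge 3$. The correction terms in the generalized Moyal bracket involve $D_\omega^{2n+1}$ with $n\ge 1$, hence $2n+1\ge 3$, and each such term therefore vanishes when the second argument is $P_2$. Consequently $\mb{f,P_2} = \{f,P_2\} + \sum_{n\ge 1} a_n\hbar^{2n}\, f D_\omega^{2n+1} P_2 = \{f,P_2\}$, which is the assertion.

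I do not expect a genuine obstacle: the proof is essentially a degree count. The only mildly delicate point is making the induction of the first step fully rigorous, in particular being explicit that the sum is finite and that the coefficients $c_{\alpha\beta}$ are independent of $f$ and $g$, so that killing the derivatives of $P_2$ of order $\ge 3$ really annihilates the whole expression term by term rather than merely simplifying it.
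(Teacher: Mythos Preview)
Your proposal is correct and follows essentially the same route as the paper: both arguments rest on the single observation that every term of $f D_\omega^{k} P_2$ carries a $k$th-order derivative of $P_2$, which vanishes for $k\ge 3$ since $P_2$ has degree at most two, so only the Poisson-bracket term survives. The only difference is that you spell out by induction the auxiliary fact that $D_\omega^{k}$ distributes exactly $k$ derivatives onto each argument, whereas the paper treats this as immediate from the definition of $D_\omega$.
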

\begin{proof}
$P_2$ is a polynomial of at most second order if it is of the form
\begin{equation}
P_2(\vec{q},\vec{p}) = \sum_{i,j=1}^N ( J_{ij} q_i q_j + K_{ij} p_i p_j + L_{ij} q_i p_j) + \sum_{i=1}^N (a_i q_i + b_i p_i) + c.
\end{equation}
Clearly any third derivative of $P_2$ is zero, hence we have
\begin{equation}
f D_\omega^{2k+1} P_2 = 0
\end{equation}
for $k \geq 1$. It thus follows that $\mb{f,P_2} = \{f, P_2\}$.
\end{proof}

\begin{proposition}
Let $f,g,h$ be functions on the phase space such that their product and products of their derivatives vanish at infinity. Then
\begin{equation}
\label{eq:propertiesGMB-DomegaPerPartes}
\int_{\RR^{2N}} f \left( g D^k_\omega h \right) \ddd{N} q \ddd{N} p = \int_{\RR^{2N}} \left( f D^k_\omega g \right) h \ddd{N} q \ddd{N} p
\end{equation}
\end{proposition}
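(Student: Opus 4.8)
The plan is to prove the identity by induction on $k$, peeling off one factor of $D_\omega$ at a time and using ordinary integration by parts, with all boundary terms discarded by the decay hypothesis.

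For $k=0$ the operator $D_\omega^0$ is the identity, so both sides equal $\int_{\RR^{2N}} fgh\,\ddd{N}q\,\ddd{N}p$ and there is nothing to do. For the inductive step I would assume the identity for $k-1$ and for every triple of functions satisfying the decay hypothesis. As already used in the proof of Proposition~\ref{prop:propertiesGMB-symAntisym}, peeling off one application of $D_\omega$ gives the recursion
\begin{equation*}
u D_\omega^{m} v = \sum_{i=1}^N \bigl( (\partial_{q_i} u)\, D_\omega^{m-1}(\partial_{p_i} v) - (\partial_{p_i} u)\, D_\omega^{m-1}(\partial_{q_i} v) \bigr), \qquad m \ge 1 .
\end{equation*}
Multiplying the $m=k$, $(u,v)=(g,h)$ instance by $f$ and integrating, I would apply the inductive hypothesis to each of the $2N$ triples $(f,\partial_{q_i} g,\partial_{p_i} h)$ and $(f,\partial_{p_i} g,\partial_{q_i} h)$ --- all of which inherit the decay hypothesis since their derivatives are among those of $f,g,h$ --- so as to move $D_\omega^{k-1}$ from the right-hand slot into the middle slot, $\int f\bigl((\partial_{q_i} g) D_\omega^{k-1}(\partial_{p_i} h)\bigr) = \int \bigl(f D_\omega^{k-1}(\partial_{q_i} g)\bigr)(\partial_{p_i} h)$, and likewise for the other family.

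Next I would integrate by parts once in $p_i$ (respectively $q_i$) to strip the remaining derivative off $h$; the boundary term vanishes because the factor multiplying $h$ is a finite sum of products of derivatives of $f,g,h$ and hence decays at infinity. Using that $\partial_{p_i}$ commutes with the constant-coefficient bidifferential operator $D_\omega^{k-1}$ and satisfies the Leibniz rule over its product structure, I would expand $\partial_{p_i}\bigl(f D_\omega^{k-1}(\partial_{q_i} g)\bigr) = (\partial_{p_i} f) D_\omega^{k-1}(\partial_{q_i} g) + f D_\omega^{k-1}(\partial_{p_i}\partial_{q_i} g)$, and analogously with $q_i$ and $p_i$ interchanged.

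Finally I would collect terms. The pieces carrying a second derivative of $g$, namely those with $D_\omega^{k-1}(\partial_{p_i}\partial_{q_i} g)$ and $D_\omega^{k-1}(\partial_{q_i}\partial_{p_i} g)$, enter with opposite signs and cancel because mixed partials commute --- the same mechanism that makes $D_\omega$ antisymmetric. What survives is $\sum_i \int \bigl((\partial_{q_i} f) D_\omega^{k-1}(\partial_{p_i} g) - (\partial_{p_i} f) D_\omega^{k-1}(\partial_{q_i} g)\bigr) h$, which by the $(u,v)=(f,g)$ instance of the recursion equals $\int (f D_\omega^{k} g)\,h$, closing the induction. I expect the only real care to be needed in (i) phrasing the inductive statement so that the hypothesis can be applied with $D_\omega^{k-1}$ in an intermediate position, and (ii) the sign bookkeeping that produces the cancellation of the unwanted second-derivative terms; the analytic input is nothing more than the vanishing of boundary terms guaranteed by the hypothesis.
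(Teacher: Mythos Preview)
Your proof is correct and follows essentially the same inductive strategy as the paper: expand one factor of $D_\omega$ via the recursion, combine one integration by parts with the inductive hypothesis, and observe that the terms carrying second derivatives of $g$ cancel by equality of mixed partials. The only cosmetic difference is the order of the two key steps---the paper integrates by parts first and then invokes the inductive hypothesis, whereas you apply the inductive hypothesis first and then integrate by parts---but the computation and the cancellation mechanism are identical.
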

\begin{proof}
The result follows by induction. We have:
\begin{align}
\int_{\RR^{2N}} f (g D^{k+1}_\omega h) \ddd{N} q \ddd{N} p & = \sum_{i=1}^N \left( \int_{\RR^{2N}} f \left( \dfrac{\partial g}{\partial q_i} D^k_\omega \dfrac{\partial h}{\partial p_i} \right) \ddd{N} q \ddd{N} p - \int_{\RR^{2N}} f \left( \dfrac{\partial g}{\partial p_i} D^k_\omega \dfrac{\partial h}{\partial q_i} \right) \ddd{N} q \ddd{N} p \right) \\ &= - \sum_{i=1}^N \left( \int_{\RR^{2N}} \dfrac{\partial f}{\partial p_i} \left( \dfrac{\partial g}{\partial q_i} D^k_\omega h \right) \ddd{N} q \ddd{N} p - \int_{\RR^{2N}} \dfrac{\partial f}{\partial q_i} \left( \dfrac{\partial g}{\partial p_i} D^k_\omega h \right) \ddd{N} q \ddd{N} p \right) \\ &= - \sum_{i=1}^N \left( \int_{\RR^{2N}} \left( \dfrac{\partial f}{\partial p_i} D^k_\omega \dfrac{\partial g}{\partial q_i} \right) h \ddd{N} q \ddd{N} p - \int_{\RR^{2N}} \left( \dfrac{\partial f}{\partial q_i} D^k_\omega \dfrac{\partial g}{\partial p_i} \right) h \ddd{N} q \ddd{N} p \right) \\ &= \int_{\RR^{2N}} \left( f D^{k+1}_\omega g \right) h \ddd{N} q \ddd{N} p
\end{align}
where in the first step we have used integration by parts, the terms containing second derivatives of $g$ cancel each other. In the second step we have used the induction assumption.

One may conclude the proof by either checking that \eqref{eq:propertiesGMB-DomegaPerPartes} holds for $k=1$, or one may define $f D^0_\omega g = fg$, then it is trivial to check that \eqref{eq:propertiesGMB-DomegaPerPartes} holds for $k=0$.
\end{proof}

\begin{corollary}
Let $f,g,h$ be functions on the phase space such that their product and products of their derivatives vanish at infinity. Then
\begin{equation}
\label{eq:propertiesGMB-LhPerPartes}
\int_{\RR^{2N}} f (\Lie_h g) \ddd{N} q \ddd{N} p = - \int_{\RR^{2N}} (\Lie_h f) g \ddd{N} q \ddd{N} p.
\end{equation}
\end{corollary}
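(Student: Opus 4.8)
The plan is to deduce the claim directly from the two results already established in this appendix: the integration-by-parts identity \eqref{eq:propertiesGMB-DomegaPerPartes} and the parity relation of Proposition~\ref{prop:propertiesGMB-symAntisym}. First I would unfold $\Lie_h g=\mb{h,g}$ using the defining series, so that under the stated decay hypotheses (which make every integral below absolutely convergent and legitimize interchanging the sum with the integral) one has $\int_{\RR^{2N}} f(\Lie_h g)\,\ddd{N}q\,\ddd{N}p=\sum_{n=0}^\infty a_n\hbar^{2n}\int_{\RR^{2N}} f\,(h D_\omega^{2n+1} g)\,\ddd{N}q\,\ddd{N}p$. The goal is then to move $D_\omega^{2n+1}$ from $g$ onto $f$ and pick up an overall sign in each term.

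For a fixed $n$, I would apply \eqref{eq:propertiesGMB-DomegaPerPartes} with $k=2n+1$ and with the roles of its second and third functions played by $h$ and $g$ respectively; this gives $\int f\,(h D_\omega^{2n+1} g)=\int (f D_\omega^{2n+1} h)\,g$ (all integrals over $\RR^{2N}$ in $\ddd{N}q\,\ddd{N}p$). Since $2n+1$ is odd, Proposition~\ref{prop:propertiesGMB-symAntisym} (with its index equal to $n$) yields $f D_\omega^{2n+1} h=-\,h D_\omega^{2n+1} f$, so each term equals $-\int (h D_\omega^{2n+1} f)\,g$. Resumming, $\int f(\Lie_h g)=-\sum_{n=0}^\infty a_n\hbar^{2n}\int (h D_\omega^{2n+1} f)\,g=-\int (\Lie_h f)\,g$, which is exactly \eqref{eq:propertiesGMB-LhPerPartes}. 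Note that $a_0=1$, so the Poisson-bracket term is included in the same scheme and needs no separate treatment.

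I expect the only genuine subtlety to be \emph{analytic} rather than algebraic: justifying the termwise manipulation of the formally infinite series that defines $\mb{\cdot,\cdot}$, i.e.\ that it converges and that summation commutes with both integration and the integration by parts. For the polynomial Hamiltonians and rapidly decaying states used elsewhere in the paper the series is either finite or rapidly convergent, so this causes no difficulty; in the fully general statement the identity is to be read termwise, and then the argument is just the two-line chain of the cited results applied inside each summand. Everything else — matching the arguments to the hypotheses of \eqref{eq:propertiesGMB-DomegaPerPartes} and tracking the single sign from Proposition~\ref{prop:propertiesGMB-symAntisym} — is routine bookkeeping.
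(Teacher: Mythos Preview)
Your proposal is correct and matches the paper's own proof, which simply states that the result follows by expressing the generalized Moyal bracket in powers of $D_\omega$ and using antisymmetry. Your write-up is more explicit---spelling out the termwise application of \eqref{eq:propertiesGMB-DomegaPerPartes} and the sign from Proposition~\ref{prop:propertiesGMB-symAntisym}---but the underlying argument is identical.
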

\begin{proof}
The result follows by expressing the generalized Moyal bracket in powers of the operator $D_\omega$ and using the the antisymmetry of the bracket.
\end{proof}

We will use the following result in order to argue that only odd powers of $D_\omega$ can be included in the definition of $\mb{\cdot,\cdot}$.
\begin{proposition} \label{prop:propertiesGMB-evenPolynomials}
$(qp)^n D^{2n}_\omega (qp)^n = (-1)^n (2n)! (n!)^2$.
\end{proposition}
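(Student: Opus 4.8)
The plan is to collapse $D_\omega^{2n}$ down to a single nonvanishing term by a degree count. First I would record the binomial expansion of the bidifferential operator. For a single conjugate pair $(q,p)$ the left-acting derivatives $\overleftarrow\partial/\partial q,\,\overleftarrow\partial/\partial p$ commute with each other and with the right-acting ones, so treating $D_\omega$ as the commuting binomial $\overleftarrow{\partial_q}\,\overrightarrow{\partial_p}-\overleftarrow{\partial_p}\,\overrightarrow{\partial_q}$ gives
\begin{equation}
f D_\omega^{k} g = \sum_{j=0}^{k} \binom{k}{j} (-1)^j \left( \dfrac{\partial^{k} f}{\partial q^{k-j}\, \partial p^{j}} \right) \left( \dfrac{\partial^{k} g}{\partial p^{k-j}\, \partial q^{j}} \right).
\end{equation}
This identity can also be obtained by the same kind of induction on $k$ used in the preceding propositions, being careful that the arrows keep each derivative on its intended factor.

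Next I would set $k=2n$ and $f=g=(qp)^n=q^n p^n$. Since $\partial_q^{a} q^n = \tfrac{n!}{(n-a)!}\,q^{n-a}$ for $a\le n$ and vanishes for $a>n$, and likewise in $p$, the factor $\partial_q^{2n-j}\partial_p^{j}(q^np^n)=\big(\partial_q^{2n-j}q^n\big)\big(\partial_p^{j}p^n\big)$ is nonzero only when simultaneously $2n-j\le n$ and $j\le n$, i.e.\ only for $j=n$. Hence the whole sum reduces to the single term $j=n$, and I would then evaluate $\partial_q^n\partial_p^n(q^np^n)=(\partial_q^n q^n)(\partial_p^n p^n)=(n!)^2$, giving
\begin{equation}
(qp)^n D_\omega^{2n} (qp)^n = \binom{2n}{n}(-1)^n \big[(n!)^2\big]^2 = (-1)^n \dfrac{(2n)!}{(n!)^2}\,(n!)^4 = (-1)^n (2n)!\,(n!)^2,
\end{equation}
which is the claim.

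The only real subtlety is the first step: one has to justify the binomial expansion of $D_\omega^{k}$, keeping track that the left-acting and right-acting derivatives never cross onto the wrong function, so that $D_\omega^k$ genuinely behaves like a power of a commuting binomial. Once this bookkeeping is settled, the remainder is an immediate support/degree argument and a factorial evaluation.
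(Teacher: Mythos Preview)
Your proof is correct and follows essentially the same approach as the paper: both arguments rest on the observation that, among the $2n$ derivatives hitting each copy of $(qp)^n$, only the terms with exactly $n$ derivatives in $q$ and $n$ in $p$ survive, yielding the factor $\binom{2n}{n}(-1)^n\bigl[(n!)^2\bigr]^2$. The paper phrases this as a combinatorial ball-ordering count, whereas you package the same count into the explicit binomial expansion of $D_\omega^{2n}$; the latter is slightly more formal but the content is identical.
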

\begin{proof}
All terms in the expansion of $D^{2n}_\omega$ will contain $2n$ derivatives and we get non-zero contribution only from terms with the same number of position and momentum derivatives. Counting the number of such terms is the same as counting the number of orderings of $n$ identical black and $n$ identical white balls, since we are essentially counting the number of ways in which we obtain the correct derivative by adding either derivatives with respect to position or momentum. Thus one can see that there is exactly $\frac{(2n)!}{(n!)^2}$ of these factors. Moreover all of these terms contain the factor $(-1)^n$ coming from the momentum derivative. We thus get:
\begin{equation}
(qp)^n D^{2n}_\omega (qp)^n = (-1)^n \dfrac{(2n)!}{(n!)^2} \left( \dfrac{\partial^{2n} (qp)^n}{\partial^n q \partial^n p} \right)^2 = (-1)^n (2n)! (n!)^2.
\end{equation}
\end{proof}

\begin{corollary}
Assume that $\mb{\cdot,\cdot}$ includes even powers of $D_\omega$, then there is a function $f$ such that $\mb{f,f} \neq 0$.
\end{corollary}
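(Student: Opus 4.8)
The plan is to exhibit an explicit $f$ built from the monomials $(qp)^m$ already studied in Proposition~\ref{prop:propertiesGMB-evenPolynomials}. Write the modified bracket as $\mb{f,g}=\sum_{k\ge 0} c_k\hbar^{\beta_k} fD_\omega^k g$, where the $c_k$ for odd $k$ are the coefficients $a_n$ (with $c_1=1$) and, by hypothesis, $c_k\neq 0$ for at least one even $k$. Let $2m$ be the smallest even power with $c_{2m}\neq 0$. If $m=0$, the bracket contains a term $c_0 fg$, and choosing $f$ to be a nonzero constant annihilates every $D_\omega^k f$ with $k\ge 1$ and leaves $\mb{f,f}=c_0 f^2\neq 0$; so the interesting case is $m\ge 1$.

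For $m\ge 1$ I would set $f=q_1^m p_1^m$ and evaluate $\mb{f,f}$ term by term. The odd-power contributions vanish by Proposition~\ref{prop:propertiesGMB-symAntisym}, since $fD_\omega^{2n+1}f=-fD_\omega^{2n+1}f$. Among the even-power contributions, those with $k<m$ carry coefficient $c_{2k}=0$ by minimality of $m$; the $k=m$ term equals $c_{2m}\hbar^{\beta_{2m}}(-1)^m(2m)!\,(m!)^2$ by Proposition~\ref{prop:propertiesGMB-evenPolynomials} (only the $i=1$ summand of $D_\omega$ acts on a function of $q_1,p_1$ alone, so the one-dimensional computation applies); and the terms with $k>m$ vanish because $q_1^m p_1^m$ has degree $m$ in $q_1$ and in $p_1$. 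Collecting these facts gives $\mb{f,f}=c_{2m}\hbar^{\beta_{2m}}(-1)^m(2m)!\,(m!)^2\neq 0$, since $c_{2m}\neq 0$, $\hbar\neq 0$, and $(2m)!\,(m!)^2>0$.

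The only step that needs a little care is the vanishing of the higher even powers $D_\omega^{2k}$ with $k>m$ on $f=q_1^m p_1^m$, and this is a brief degree-counting argument: expanding $D_\omega^{2k}$ in the one relevant degree of freedom, every term distributes $2k$ derivatives onto each of the two copies of $f$, split between $\partial_{q_1}$ and $\partial_{p_1}$, so at least one copy receives more than $m$ derivatives of one type and is killed. Everything else follows directly from the two propositions proved above. I would close by noting that antisymmetry forces $\mb{f,f}=0$ for every $f$, so a bracket containing even powers of $D_\omega$ can never be antisymmetric — which is the intended consequence.
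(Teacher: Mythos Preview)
Your argument is correct and follows the same route as the paper: pick the smallest even power $2m$ appearing in the bracket, set $f=(q_1p_1)^m$, kill the odd powers via Proposition~\ref{prop:propertiesGMB-symAntisym}, kill the higher even powers by degree counting, and evaluate the surviving $k=m$ term with Proposition~\ref{prop:propertiesGMB-evenPolynomials}. The only differences are cosmetic: you treat the $m=0$ case separately (the paper's formula already covers it, since $(qp)^0=1$ and the degree argument still applies) and you are more explicit about why the $k>m$ terms vanish.
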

\begin{proof}
Let $\mb{\cdot,\cdot}$ be defined as:
\begin{equation}
\mb{f,g} = \sum_{k=0}^\infty b_k \hbar^{k-1} f D_\omega^{k} g
\end{equation}
and let $n \in \NN$ be the smallest number such that $b_{2n} \neq 0$. We then have
\begin{equation}
\mb{(qp)^n, (qp)^n} = b_{2n} \hbar^{2n-1} (qp)^n D^{2n}_\omega (qp)^n,
\end{equation}
because $(qp)^n D^{2 \ell + 1}_\omega (qp)^n = 0$ due to Proposition~\ref{prop:propertiesGMB-symAntisym} and $(qp)^n D^{2 \ell}_\omega (qp)^n = 0$ for $\ell > n$ because then the order of derivatives in $D^{2 \ell}_\omega$ is strictly higher than $n$. The result follows by Proposition~\ref{prop:propertiesGMB-evenPolynomials}.
\end{proof}

\begin{corollary}
Assume that $\mb{\cdot,\cdot}$ includes even powers of $D_\omega$, then either the resulting time-evolution is not Markovian, or the generator of time-translations and energy observable coincide only for $t=0$.
\end{corollary}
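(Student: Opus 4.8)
The plan is to feed the preceding corollary into a short computation of how the energy observable itself moves under its own flow, and then read off the alternative. Suppose the bracket's smallest even term is $D_\omega^{2n}$, so $b_{2n}\neq 0$, and take as Hamiltonian $H=q_1^{2n+1}+p_1^{2n+1}$. By antisymmetry only even powers of $D_\omega$ contribute to $\mb{H,H}$, and for this $H$ every even power other than $2n$ either has a vanishing coefficient (by minimality of $n$) or differentiates $H$ too often, so $\mb{H,H}=b_{2n}\hbar^{2n-1}\,H D_\omega^{2n}H = 2b_{2n}\hbar^{2n-1}((2n+1)!)^2\,q_1 p_1$, which is a nonzero and non-constant function on phase space. (The preceding corollary already provides some $H$ with $\mb{H,H}\neq 0$; here I only need the slightly stronger fact that $\mb{H,H}$ can be made non-constant.) One then studies the time-evolution generated by this $H$, namely $\dot\rho=\Lie_H\rho$.

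Next I would compute the Heisenberg-picture motion of the energy observable. Observables evolve with the adjoint generator, $\dot A(t)=\mathcal G^\dagger A(t)$, and for $\mathcal G=\Lie_H$ this is $\dot A(t)=\Lie_H^\dagger A(t)$; here $\Lie_H^\dagger\neq-\Lie_H$ in general, since the skew-adjointness property~(iv) rests precisely on having only odd powers. Pairing against a compactly supported test function $g$ and integrating by parts via $\int f\,(g D_\omega^k h)=\int (f D_\omega^k g)\,h$ gives $\braket{\Lie_H^\dagger H,g}=\braket{H,\mb{H,g}}=\braket{\mb{H,H},g}$, so $\Lie_H^\dagger H=\mb{H,H}$ as a phase-space function and therefore $\dot H(0)=\mb{H,H}\neq 0$. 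Hence $H(t)-H(0)=t\,\mb{H,H}+O(t^2)$ is a non-constant function for $0<t<\varepsilon$; and since $\mb{f,q_i}=-\partial_{p_i}f$ and $\mb{f,p_i}=\partial_{q_i}f$ by property~(iii), the map $\Lie_f$ determines $f$ up to an additive constant, so $\Lie_{H(t)}\neq\Lie_{H(0)}$ on that interval.

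The two horns then follow at once. If the time-evolution is Markovian its generator is constant in time, hence equal to its $t=0$ value $\Lie_{H(0)}$; but $\Lie_{H(0)}\neq\Lie_{H(t)}$ for $0<t<\varepsilon$, so the generator of time-translations is $\Lie$ of the current energy observable only at $t=0$. If instead one insists that the generator be $\Lie_{H(t)}$ at every time — that is, that the generator of time-translations and the energy observable always ``coincide'' — then, because $\Lie_{H(t)}$ varies with $t$, this generator is time-dependent and the evolution fails to be Markovian.

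I expect the main difficulty to be interpretive rather than computational: one must fix a precise reading of ``the generator of time-translations and energy observable coincide'' (the natural one being $\mathcal G=\Lie_{H(t)}$ for the Heisenberg-evolved energy observable), and one must keep straight that, without skew-adjointness, the state generator $\Lie_H$ and the observable generator $\Lie_H^\dagger$ are distinct maps, with $\dot H$ governed by the latter. A small but genuine point, already reflected in the choice of $H$, is that $\mb{H,H}\neq 0$ alone does not suffice: if $\mb{H,H}$ were a pure constant (as it is for $H=(q_1p_1)^n$) the resulting shift would be invisible to $\Lie$ and both horns could fail, so the witness Hamiltonian must be taken with $\mb{H,H}$ genuinely non-constant. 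The remaining steps are routine.
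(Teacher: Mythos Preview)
Your approach is the paper's: produce an $H$ with $\mb{H,H}\neq 0$, infer that the energy observable moves in the Heisenberg picture, and read off the dichotomy. The paper's own proof is two lines and simply invokes the preceding corollary for the existence of such an $H$, then asserts $\dot H\neq 0$.

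Your version is more careful in two respects, and both are genuine refinements of the paper's argument rather than a different route. First, you correctly track that once even powers of $D_\omega$ enter, $\Lie_H$ is no longer skew-adjoint, so the Heisenberg generator for observables is $\Lie_H^\dagger$ and one must actually compute $\Lie_H^\dagger H=\mb{H,H}$ via the integration-by-parts identity; the paper passes over this. Second, your observation that $\mb{H,H}$ should be taken non-constant is well founded under the reading ``coincide'' $=$ ``$\mathcal G=\Lie_{H(t)}$'': the paper's own witness $H=(q_1p_1)^n$ yields $\mb{H,H}$ equal to a nonzero \emph{constant}, and a constant shift in $H(t)$ leaves $\Lie_{H(t)}$ unchanged, so under that reading both horns would hold and the dichotomy would collapse. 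Your choice $H=q_1^{2n+1}+p_1^{2n+1}$, giving $\mb{H,H}\propto q_1p_1$, repairs this cleanly. The paper is presumably reading ``coincide'' more loosely as $H(t)=H(0)$, under which its brief argument already suffices; your version works under either interpretation.
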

\begin{proof}
Assume that $\mb{\cdot,\cdot}$ includes even powers of $D_\omega$ and let $H$ be a function such that $\mb{H,H} \neq 0$. But then in Heisenberg picture we have $\dot{H} \neq 0$ and so $H = H(t)$ depends on time. Let $\mathcal{G}$ be the generator of the time-evolution constructed from the Hamiltonian using the generalized Moyal bracket. We now have two options: we either allow the generator to evolve itself in time and we have $\mathcal{G} = \Lie_{H(t)}$, or we keep the generator constant in time and we have $\mathcal{G} = \Lie_H = \Lie_{H(0)}$. In the first case the time-evolution is not Markovian anymore since the generator depends on time, in the second case the energy observable $H(t)$ corresponds to the generator $\mathcal{G}$ only for $t=0$.
\end{proof}

\begin{example}[Example of functions such that $\mb{f,g} = 0$ but $\{f,g\} \neq 0$]
For the Pöschl--Teller potential we have the Hamiltonian
\begin{equation}
H(q,p) = \frac{p^2}{2m} +
\frac{\eta^2}{2m}\left(1 - \dfrac{2}{\cosh^2(q\eta / \hbar)}\right)
\end{equation}
where $\eta$ is some constant with units of momentum. The ground state has energy $0$ and the Wigner function \cite{CurtrightFairlieZachos-timeIndependentWignerFunction}
\begin{equation}
\rho(q,p) = \frac1\hbar\dfrac{\sin(2qp / \hbar)}{\sinh(2q\eta / \hbar) \sinh(\pi p/\eta)}.
\end{equation}
So we must have $\mb{H, \rho} = 0$ for the Moyal bracket used in quantum theory, but it is straightforward to check that $\{H,\rho\} \neq 0$. We mention that the occurrence of $\hbar$ in the Hamiltonian is essential and it cannot be easily replaced by another constant with the same units.
\end{example}

\begin{example}[Example of functions such that $\{f,g\} = 0$ but $\mb{f,g} \neq 0$]
Observe that for any function $f: \RR^2 \to \RR$ we have $\{f, f^2\} = 0$; we will show that analogical result does not hold for the generalized Moyal bracket. One can also show that we have
\begin{equation}
(q^n p^{n+1}) D_\omega^{2n+1} (q^{2n} p^{2n+2}) = (-1)^{n+1} \dfrac{(2n)! (2n+1)! (2n+2)!}{(n-1)! (n+2)!} q^{n-1} p^{n+2}.
\end{equation}
Let then $n \geq 1$ be the lowest index such that $a_n \neq 0$, where $a_n$ are the coefficients used in the definition of the generalized Moyal bracket in Eq.~\eqref{eq:time-generalizedMoyal}. Consequently, we have $\mb{f,f^2}\ne 0$ for $f=q^np^{n+1}$.
\end{example}

\begin{proposition} \label{prop:propertiesGMB-max2Energy}
Let $H:\RR^{2N} \to \RR$ be a Hamiltonian and let $g_H(I)$ be the corresponding spectral measure and assume that the energy spectrum is discrete, i.e., that there are energies $E_n \in \RR$, $n \in \NN$, such that $g_H(I;t) = \sum_{n: E_n \in I}g_H(E_n;t)$. Let $\rho$ be a state such that $\Pr[\tilde{H} = E_k] \neq 0$ for at most two $k$, that is $\Pr[\tilde{H} = E_k] \neq 0$ only for $k \in \{n_1, n_2\}$. Then $\frac{d}{dt} \Pr[\tilde{H} = E_k] = 0$.
\end{proposition}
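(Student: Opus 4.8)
The plan is to reduce the claim to a $2\times2$ linear-algebra argument. Writing $p_k(t):=\Pr[\tilde H(t)=E_k]$, I would show that both the total probability $\sum_k p_k(t)$ and the mean energy $\sum_k E_k p_k(t)$ are constant in time, and then use that, when only the two levels $n_1,n_2$ are occupied, these two conserved quantities over-determine $p_{n_1}(t)$ and $p_{n_2}(t)$ and hence force them to be constant.

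Both conservation laws come almost for free from the properties of the bracket established earlier. Conservation of the mean energy is already known, $\frac{d}{dt}\braket{\tilde H}=\braket{\dot H,\rho}=0$, since $\dot H=-\Lie_H H=-\mb{H,H}=0$ by antisymmetry. For the total probability I would pass to the Heisenberg picture, where $p_k(t)=\braket{g_H(E_k;t),\rho}$ with $\dot g_H(E_k;t)=-\Lie_H g_H(E_k;t)$ by Eq.~\eqref{eq:time-timeEvolution}; summing over $k$ and using the normalization $\sum_k g_H(E_k;0)=1$ of the spectral measure, the function $\phi(t):=\sum_k g_H(E_k;t)$ solves $\dot\phi=-\Lie_H\phi$ with $\phi(0)=1$. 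But the constant function $1$ is a polynomial of degree at most two, so $\Lie_H 1=\mb{H,1}=\{H,1\}=0$ by property~\eqref{item:time-GMB-polynomials}; that is, $1$ solves the same equation with the same initial datum, so uniqueness of solutions gives $\phi(t)\equiv 1$, hence $\sum_k p_k(t)=\braket{1,\rho}=1$ for all $t$. The one point requiring a little care here is interchanging $\Lie_H$ with the possibly infinite sum over the spectrum.

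For the endgame, assume first that only $p_{n_1}$ and $p_{n_2}$ are ever nonzero. Then $p_{n_1}(t)+p_{n_2}(t)=1$ and $E_{n_1}p_{n_1}(t)+E_{n_2}p_{n_2}(t)=\braket{\tilde H}$; differentiating and using $E_{n_1}\neq E_{n_2}$ yields at once $\dot p_{n_1}(t)=\dot p_{n_2}(t)=0$, which is the claim. If instead one reads the hypothesis as a condition on $\rho$ at $t=0$ only, I would first note that positivity $p_k(t)\ge 0$ together with $p_k(0)=0$ for $k\notin\{n_1,n_2\}$ makes $t=0$ a minimum of each such $p_k$, whence $\dot p_k(0)=0$; combined with the differentiated conservation laws $\sum_k\dot p_k(0)=0$ and $\sum_k E_k\dot p_k(0)=0$ and again $E_{n_1}\neq E_{n_2}$, this forces $\dot p_{n_1}(0)=\dot p_{n_2}(0)=0$ as well. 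Since $\Lie_H$ is time-independent the same reasoning applies at every instant at which at most two levels are populated, and a continuity/bootstrap argument on the set of such instants should propagate $\frac{d}{dt}p_k\equiv 0$ to all times. I expect this propagation step — showing that the two-level property is not destroyed by the evolution — together with the termwise manipulation of the infinite spectral sum to be the only genuine obstacles; the algebraic core, two conserved moments and a nonsingular $2\times2$ Vandermonde matrix, is immediate.
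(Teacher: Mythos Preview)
Your proposal is correct and follows essentially the same route as the paper: use the two conserved scalars (total probability and mean energy), restrict to the two occupied levels, and invert the $2\times 2$ Vandermonde system coming from $E_{n_1}\neq E_{n_2}$. The paper's proof is in fact terser than yours: it writes down $1=p_{n_1}+p_{n_2}$ and $\braket{\tilde H}=E_{n_1}p_{n_1}+E_{n_2}p_{n_2}$ at $t=0$, differentiates in the Heisenberg picture, and reads off $\dot p_{n_1}(0)=\dot p_{n_2}(0)=0$ directly. Your version is more explicit on two points the paper leaves implicit: (i) why $\sum_k p_k(t)$ and $\sum_k E_k p_k(t)$ are conserved (via $\Lie_H 1=0$ and $\Lie_H H=0$), and (ii) why the derivatives of the unoccupied levels vanish at $t=0$ (your positivity/minimum argument). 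The paper does not attempt your propagation-to-all-times step; its conclusion is really only the instantaneous statement $\dot p_k(0)=0$, so your bootstrap discussion goes beyond what the paper proves.
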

\begin{proof}
From the assumptions it follows that at $t=0$,
\begin{align}
1 & = \braket{g_H(E_{n_1}), \rho} + \braket{g_H(E_{n_2}), \rho}, \\
\braket{\tilde{H}} & = E_{n_1} \braket{g_H(E_{n_1}), \rho} + E_{n_2} \braket{g_H(E_{n_2}), \rho}.
\end{align}
Taking the time derivative in Heisenberg picture at $t=0$ we get
\begin{align}
0 & = \braket{\dot{g}_H(E_{n_1}), \rho} + \braket{\dot{g}_H(E_{n_2}), \rho}, \\
0 & = E_{n_1} \braket{\dot{g}_H(E_{n_1}), \rho} + E_{n_2} \braket{\dot{g}_H(E_{n_2}), \rho}
\end{align}
and since $E_{n_1} \neq E_{n_2}$ we get $\braket{\dot{g}_H(E_{n_1}), \rho}=0$ and $\braket{\dot{g}_H(E_{n_2}), \rho}=0$.
\end{proof}

\section{Phase space spectral measures of energy and angular momentum of the hydrogen atom} \label{appendix:hydrogen}
The spectrum of energies of bound states is discrete and we set $E_n = -\frac{\kappa}{2 a_0} \frac{1}{n^2}$. The phase space spectral measure of energy, $g_H(I)$, is given as
\begin{equation}
g_H(I; \vec{q}, \vec{p}) = \sum_{n: E_n \in I} T_n(H(\vec{q}, \vec{p})
\end{equation}
where $T_n^H$ are the piecewise linear functions. If we assume that $g_H(E_n; \vec{q}, \vec{p}) \neq 0$ only if $H(\vec{q}, \vec{p}) \in [E_{n-1}, E_{n+1}]$, then we get $T_n(x) \neq 0$ only if $x \in [E_{n-1}, E_{n+1}]$. For $x \in [E_n, E_{n+1}]$ the normalization and expectation value conditions on $g_H(I; \vec{q}, \vec{p})$ become
\begin{align}
T_n(x) + T_{n+1}(x) & = 1, \\
E_n T_n(x) + E_{n+1} T_{n+1}(x) & = x. \\
\end{align}
This series of linear equations has unique solution given by the sawtooth functions:
\begin{align}
T_1^H (x) & =
\begin{cases}
\frac{E_2 - x}{E_2 - E_1} & x \leq E_2 \\
0 & x \geq E_2
\end{cases}
\\
T_2^H (x) & =
\begin{cases}
\frac{x - E_1}{E_2 - E_1} & x \leq E_2 \\
\frac{E_3 - x}{E_3 - E_2} & x \in [E_2,E_3] \\
0 & x \geq E_3
\end{cases}
\end{align}
and for $n \geq 3$ as
\begin{equation}
T_n^H (x) =
\begin{cases}
\frac{x - E_{n-1}}{E_n - E_{n-1}} & x \in [E_{n-1},E_n] \\
\frac{E_{n+1} - x}{E_{n+1} - E_n} & x \in [E_n,E_{n+1}] \\
0 & x \notin [E_{n-1},E_{n+1}]
\end{cases}
\end{equation}
Note that for $n \neq 2$ we have $T_n^H(x) \geq 0$ for all $x$, while $T_2^H(x) \leq 0$ for $x \leq E_1$ but $T_2^H(x) \geq 0$ for $x \leq E_2$.

Since this is only the spectral measure for bound states, we only need to check the conditions that the phase space spectral measure has satisfy for negative energies, i.e., only for $I \subset \RR_-$, where $\RR_-$ is the set of negative real numbers. The normalization condition $g_H(\RR_-) = 1$, follows from $\sum_{n=1}^\infty T_n^H(x) = 1$, while the expectation value condition $\sum_{n=1}^\infty g_H(E_n; \vec{q}, \vec{p}) = H(\vec{q}, \vec{p})$ follows from $\sum_{n=1}^\infty E_n T_n^H(x) = x$.

The phase space spectral measure for angular momentum is constructed in a similar way,
\begin{equation}
g_{L_i}(I; \vec{q}, \vec{p}) = \sum_{m: m \hbar \in I} T_m^{L_i}(L_i(\vec{q}, \vec{p}))
\end{equation}
for $m \in \ZZ$. Here $T_m^{L_i}$ are the sawtooth functions given as
\begin{equation}
T_m^{L_i} (x) =
\begin{cases}
\frac{x - (m-1)\hbar}{\hbar} & x \in [(m-1)\hbar, m\hbar] \\
\frac{(m+1)\hbar - x}{\hbar} & x \in [m\hbar, (m+1)\hbar] \\
0 & x \notin [(m-1)\hbar, (m+1)\hbar]
\end{cases}
\end{equation}
The normalization condition $g_{L_i}(\RR) = 1$ follows from $\sum_{m \in \ZZ} T_m^{L_i} = 1$ and the expectation value condition $\sum_{m \in \ZZ} m \hbar g_{L_i}(m \hbar; \vec{q}, \vec{p}) = L_i(\vec{q}, \vec{p})$ follows from $\sum_{m \in \ZZ} m \hbar T_m^{L_i}(x) = x$.

\begin{proposition}
We have $g_H(E_n) g_{L_i}(m \hbar) \neq 0$ only if $\abs{m} \leq 2(n+1)$.
\end{proposition}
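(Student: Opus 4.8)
\emph{Proof proposal.} The plan is to combine the support constraints of the two sawtooth spectral functions with the elementary geometric inequality $\abs{L_i(\vec q,\vec p)}\le\abs{\vec q}\,\abs{\vec p}$ and with the fact that a sufficiently negative value of the Hamiltonian pins down both $\abs{\vec q}$ and $\abs{\vec p}$. Since $H$ depends only on $\abs{\vec q}$ and $\abs{\vec p}$, a relabelling of the coordinate axes reduces everything to the case $i=3$; the case $m=0$ is trivial, so I will assume $m\ne0$.

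First I would read off the support conditions from the explicit functions in Appendix~\ref{appendix:hydrogen}. One checks, uniformly in $n$, that $g_H(E_n;\vec q,\vec p)=T_n^H(H(\vec q,\vec p))$ can be nonzero only if $H(\vec q,\vec p)<E_{n+1}$: for $n=1,2$ the function $T_n^H$ vanishes for $x\ge E_{n+1}$, and for $n\ge3$ it is supported in the open interval $(E_{n-1},E_{n+1})$. Likewise $g_{L_3}(m\hbar;\vec q,\vec p)=T_m^{L_3}(L_3(\vec q,\vec p))$ can be nonzero only if $L_3(\vec q,\vec p)\in((m-1)\hbar,(m+1)\hbar)$, and for $m\ne0$ this open interval lies strictly on one side of the origin, so it forces $\abs{L_3(\vec q,\vec p)}>(\abs{m}-1)\hbar$. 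Hence, if $g_H(E_n)\,g_{L_3}(m\hbar)$ is not identically zero, there exists a phase-space point at which both inequalities hold simultaneously.

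Next I would bound $\abs{L_3}$ at such a point. The cross-product (Cauchy--Schwarz) estimate gives $\abs{L_3(\vec q,\vec p)}\le\abs{\vec q}\,\abs{\vec p}$, so it suffices to show $\abs{\vec q}\,\abs{\vec p}<2(n+1)\hbar$ whenever $H(\vec q,\vec p)<E_{n+1}$. Because $\tfrac{\abs{\vec p}^2}{2\mu}\ge0$, the inequality $\tfrac{\abs{\vec p}^2}{2\mu}-\tfrac{\kappa}{\abs{\vec q}}<E_{n+1}<0$ forces on the one hand $\abs{\vec q}<\kappa/\abs{E_{n+1}}=2a_0(n+1)^2$ and on the other hand $\abs{\vec p}^2<2\mu\kappa/\abs{\vec q}$; multiplying these gives $\abs{\vec q}^2\abs{\vec p}^2<2\mu\kappa\abs{\vec q}<4\mu\kappa a_0(n+1)^2=4\hbar^2(n+1)^2$, where the last equality uses $a_0=\hbar^2/(\kappa\mu)$. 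Taking square roots yields $\abs{\vec q}\,\abs{\vec p}<2(n+1)\hbar$.

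Combining, any common point of the two supports satisfies $(\abs{m}-1)\hbar<\abs{L_3(\vec q,\vec p)}\le\abs{\vec q}\,\abs{\vec p}<2(n+1)\hbar$, hence $\abs{m}<2(n+1)+1$, and since $m\in\ZZ$ this is exactly $\abs{m}\le2(n+1)$. I do not expect a genuine obstacle: the only points needing care are (a) the uniform support bound $H<E_{n+1}$ across the slightly different definitions of $T_1^H$, $T_2^H$, and $T_n^H$ for $n\ge3$, and (b) keeping track of strict versus non-strict inequalities together with the dimensional identity $\mu\kappa a_0=\hbar^2$, so that the constant lands precisely on $2(n+1)$. (One can in fact optimize the last estimate, choosing $\abs{\vec q}=a_0(n+1)^2$, to get the stronger bound $\abs{m}\le n+1$, but the cruder argument above already proves the stated claim.)
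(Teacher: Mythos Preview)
Your proof is correct and follows essentially the same route as the paper's: both extract the support conditions $H<E_{n+1}$ and $\abs{L_i}>(\abs m-1)\hbar$, bound $\abs{L_i}\le\abs{\vec q}\,\abs{\vec p}$, and then combine $\abs{\vec q}<\kappa/\abs{E_{n+1}}$ with $\abs{\vec p}^2\abs{\vec q}<2\mu\kappa$ to reach $(\abs m-1)^2\hbar^2<4\hbar^2(n+1)^2$. Your write-up is in fact slightly more careful than the paper's (explicitly treating $m=0$ and the uniform support bound across the three cases of $T_n^H$), and your parenthetical observation that the optimized estimate actually yields $\abs m\le n+1$ is correct and goes beyond what the paper states.
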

\begin{proof}
Let $n \in \NN$ and $m \in \ZZ$ be such that $g_H(E_n) g_{L_i}(m \hbar) \neq 0$, which implies $g_H(E_n) \neq 0$ and $g_{L_i}(m \hbar) \neq 0$. We have $g_H(E_n) \neq 0$ only if $H(\vec{q}, \vec{p}) \in (E_{n-1}, E_{n+1})$, this implies
\begin{equation} \label{eq:hydrogen-nmRel-EnBound}
- \dfrac{\abs{\vec{p}}^2}{2 \mu} + \dfrac{\kappa}{\abs{\vec{q}}} > \abs{E_{n+1}}.
\end{equation}
We have
\begin{equation}
\abs{E_{n+1}} < - \dfrac{\abs{\vec{p}}^2}{2 \mu} + \dfrac{\kappa}{\abs{\vec{q}}} \leq \dfrac{\kappa}{\abs{\vec{q}}}
\end{equation}
which yields
\begin{equation} \label{eq:hydrogen-nmRel-qBound}
\abs{\vec{q}} < \dfrac{\kappa}{\abs{E_{n+1}}}.
\end{equation}
Using Eq.~\eqref{eq:hydrogen-nmRel-EnBound} we also get
\begin{equation}
\abs{E_{n+1}} + \dfrac{\abs{\vec{p}}^2}{2 \mu} < \dfrac{\kappa}{\abs{\vec{q}}}
\end{equation}
from which we get
\begin{equation} \label{eq:hydrogen-nmRel-p2qBound}
\abs{\vec{p}}^2 \abs{\vec{q}} < 2\mu ( \kappa - \abs{E_{n+1}} \abs{\vec{q}} ) \leq 2\mu \kappa.
\end{equation}
We have $g_{L_i}(m \hbar) \neq 0$ only if $L_i(\vec{q}, \vec{p}) \in ((m-1)\hbar, (m+1)\hbar)$ which implies
\begin{equation}
(\abs{m} - 1) \hbar < \abs{L_i}.
\end{equation}
Using the formula for the norm of cross product we get $\abs{L_i} \leq \abs{\vec{L}} \leq \abs{\vec{q}} \abs{\vec{p}}$ and we get
\begin{equation}
(\abs{m} - 1) \hbar < \abs{\vec{q}} \abs{\vec{p}}.
\end{equation}
By squaring both sides and using Eq.~\eqref{eq:hydrogen-nmRel-p2qBound} we get
\begin{equation}
(\abs{m} - 1)^2 \hbar^2 < \abs{\vec{q}}^2 \abs{\vec{p}}^2 < 2 \mu \kappa \abs{\vec{q}}
\end{equation}
and finally using Eq.~\eqref{eq:hydrogen-nmRel-qBound} we obtain
\begin{equation}
(\abs{m} - 1)^2 \hbar^2 < 2 \mu \kappa \abs{\vec{q}} < 2 \mu \dfrac{\kappa^2}{\abs{E_{n+1}}} = 4 \mu \kappa a_0 (n+1)^2 = 4 \hbar^2 (n+1)^2.
\end{equation}
Taking a square root we obtain the final expression
\begin{equation}
\abs{m} \leq 2(n+1).
\end{equation}
\end{proof}

\section{Hydrogen atom in a non-stationary electric field} \label{appendix:electric}
Let
\begin{align}
H(\vec{q}, \vec{p}) & = \dfrac{\abs{\vec{p}}^2}{2 \mu} - \dfrac{\kappa}{\abs{\vec{q}}}, \\
H_e(\vec{q}, \vec{p}, t) & = -2 eE \sin(\omega t) q_3,
\end{align}
be the corresponding Hamiltonians. Let $f: \RR^6 \to \RR$ be a function on the phase space, then we have
\begin{equation}
\Lie_{H_e} f(\vec{q}. \vec{p}) = -2 eE \sin(\omega t) \dfrac{\partial}{\partial p_3} f(\vec{q}, \vec{p})
\end{equation}
and
\begin{equation}
\Lie_H f(\vec{q}, \vec{p}) = - \dfrac{\vec{p}}{\mu} \cdot \vec{\nabla}_q f(\vec{q}, \vec{p}) - \mb{\dfrac{\kappa}{\abs{\vec{q}}}, f(\vec{q}, \vec{p})}.
\end{equation}
We then have
\begin{align}
\Lie_{H_e} \Lie_H f(\vec{q}, \vec{p}) & = -2 eE \sin(\omega t) \dfrac{\partial}{\partial p_3} \left( - \dfrac{\vec{p}}{\mu} \cdot \vec{\nabla}_q f(\vec{q}, \vec{p}) - \mb{\dfrac{\kappa}{\abs{\vec{q}}}, f(\vec{q}, \vec{p})} \right) \\ &= -2 eE \sin(\omega t) \left( - \dfrac{1}{\mu} \dfrac{\partial}{\partial q_3} f(\vec{q}, \vec{p}) - \dfrac{\vec{p}}{\mu} \cdot \vec{\nabla}_q \dfrac{\partial}{\partial p_3} f(\vec{q}, \vec{p}) - \mb{\dfrac{\kappa}{\abs{\vec{q}}}, \dfrac{\partial}{\partial p_3} f(\vec{q}, \vec{p})} \right) \\ &= 2 eE \sin(\omega t) \dfrac{1}{\mu} \dfrac{\partial}{\partial q_3} f(\vec{q}, \vec{p}) + \Lie_H \Lie_{H_e} f(\vec{q}, \vec{p})
\end{align}
and we get
\begin{equation}
\Lie_{H_e} \Lie_H - \Lie_H \Lie_{H_e} = 2 eE \sin(\omega t) \dfrac{1}{\mu}
\dfrac{\partial}{\partial q_3} = \Lie_{G}
\end{equation}
for
\begin{equation}
G = - \dfrac{2 eE}{\mu} \sin(\omega t) p_3.
\end{equation}

\section{Scattering theory of the hydrogen atom} \label{appendix:scattering}
Let $H(\vec{q}, \vec{p}) = \frac{\abs{\vec{p}}^2}{2 \mu} + V(\vec{q})$ be a Hamiltonian, we will decompose the time-evolution equation for a state $\rho$ as follows:
\begin{align}
\dot{\rho} = \mb{H, \rho} & = \{\frac{\abs{\vec{p}}^2}{2 \mu}, \rho(t; \vec{q}, \vec{p})\} + \mb{V(\vec{q}), \rho(t; \vec{q}, \vec{p})} \\
 & = \{\frac{\abs{\vec{p}}^2}{2 \mu}, \rho(t; \vec{q}, \vec{p}) \} + \int\limits_{\RR^3} \mb{V(\vec{q}), \delta^{(3)}(\vec{p} - \vec{p'})} \rho(t; \vec{q}, \vec{p'}) \ddd{3} p',
\end{align}
where we have used that in $\mb{V(\vec{q}), \rho(t; \vec{q}, \vec{p})}$ only partial derivatives in $\vec{p}$ act on $\rho(t; \vec{q}, \vec{p})$. Let us denote
\begin{equation}
K(\vec{q}, \vec{p}, \vec{p'}) = \mb{V(\vec{q}), \delta^{(3)}(\vec{p} - \vec{p'})}.
\end{equation}
Also note that
\begin{equation}
\{\frac{\abs{\vec{p}}^2}{2 \mu}, \rho(t; \vec{q}, \vec{p})\} = - \frac{\vec{p}}{\mu} \cdot \vec{\nabla}_q \rho(t; \vec{q}, \vec{p})
\end{equation}
where $\vec{\nabla}_q$ is the gradient in $\vec{q}$. Putting all this together we get a new form of the time-evolution equation:
\begin{equation}
\label{eq:scattering-timeEvolution}
\dot{\rho}(t; \vec{q}, \vec{p}) + \frac{\vec{p}}{\mu} \cdot \vec{\nabla}_q \rho(t; \vec{q}, \vec{p}) = \int\limits_{\RR^3} K(\vec{q}, \vec{p}, \vec{p'}) \rho(t; \vec{q}, \vec{p'}) \ddd{3} p'.
\end{equation}
In order to solve this equation we will use the fact that the retarded Green's function for the differential operator on left-hand-side is known. That is, let
\begin{equation}
\label{eq:scattering-greens}
G(\vec{q}, \vec{p}, t) = \theta(t) \delta^{(3)} (\vec{q} - \frac{\vec{p} t}{\mu}),
\end{equation}
where $\theta(t)$ is the Heaviside step function and $\delta^{(3)} (\vec{q} - \frac{\vec{p} t}{\mu})$ is to be integrated over $\vec{q}$. Then we have
\begin{equation}
\dot{G}(\vec{q}, \vec{p}, t) + \frac{\vec{p}}{\mu} \cdot \vec{\nabla}_q G(\vec{q}, \vec{p}, t) = \delta(t) \delta^{(3)}(\vec{q}).
\end{equation}

Before we construct the solution of the time-evolution that we aim for, we have to discuss initial conditions. In this case we want to have initial conditions formally at $t = -\infty$. In order to define these initial conditions in a meaningful way let $\rho_{\ini}$ be the solution of the free time-evolution equation, that is,
\begin{equation}
\label{eq:scattering-indyn}
\dot{\rho}_{\ini} = \{ \dfrac{\abs{\vec{p}}^2}{2 \mu}, \rho_{\ini} \}.
\end{equation}
Then as an initial condition we require that (pointwise),
\begin{equation} \label{eq:scattering-initial}
\lim_{t \to -\infty}(\rho - \rho_{\ini}) = 0.
\end{equation}
One finds easily that
\begin{equation} \label{eq:scattering-sol}
\rho(t; \vec{q}, \vec{p}) = \rho_{\ini}(t; \vec{q}, \vec{p}) + \int\limits_{\RR} \int\limits_{\RR^3} \int\limits_{\RR^3} G(\vec{q} - \vec{q'}, \vec{p}, t - \tau) K(\vec{q'}, \vec{p}, \vec{p'}) \rho(\tau; \vec{q'}, \vec{p'}) \ddd{3} p' \ddd{3} q' \dd \tau.
\end{equation}
satisfies the initial condition \eqref{eq:scattering-initial}. In addition, applying Eq.~\eqref{eq:scattering-sol} to the left hand side of Eq.~\eqref{eq:scattering-timeEvolution} gives immediately the right hand side of Eq.~\eqref{eq:scattering-timeEvolution}, by virtue of Eq.~\eqref{eq:scattering-greens} and Eq.~\eqref{eq:scattering-indyn}.

Eq.~\eqref{eq:scattering-sol} is still an integral equation but we will be able to solve it iteratively. Plugging in the expression for $G$ and integrating we get:
\begin{align}
\rho(t; \vec{q}, \vec{p}) & = \rho_{\ini}(t; \vec{q}, \vec{p}) + \int\limits_{\RR} \int\limits_{\RR^3} \int\limits_{\RR^3} \theta(t - \tau) \delta(\vec{q} - \vec{q'} - \frac{\vec{p}(t - \tau)}{\mu}) K(\vec{q'}, \vec{p}, \vec{p'}) \rho(\tau; \vec{q'}, \vec{p'}) \ddd{3} p' \ddd{3} q' \dd \tau \\
 & = \rho_{\ini}(t; \vec{q}, \vec{p}) + \int\limits_{\RR^3} \int\limits_{-\infty}^{t} K(\vec{q} - \frac{\vec{p}}{\mu}(t - \tau), \vec{p}, \vec{p'}) \rho(\tau; \vec{q} - \frac{\vec{p}}{\mu}(t - \tau), \vec{p'}) \dd \tau \ddd{3} p'. \label{eq:scattering-EQ}
\end{align}

We can solve the equation above in a perturbative manner via $V(\vec{q}) \to \lambda V(\vec{q})$ and the expansion
\begin{equation}
\rho(t;\vec{q},\vec{p}) = \sum_{n=0}^\infty \sum_{k=0}^\infty \hbar^{2n} \lambda^k \rho_{n,k}(t; \vec{q},\vec{p}).
\end{equation}
Our strategy is now to determine how $\rho_{n,k}(t; \vec{q},\vec{p})$ depends on $\abs{\vec{q}}$. We do this since we assume that the detector is positioned far from the center of the potential, so we are interested only in the limit $\abs{\vec{q}} \to \infty$. In order to simplify the notation let $a_0 = 1$, then after plugging this expression into Eq.~\eqref{eq:scattering-EQ} we get
\begin{equation}
\begin{split}
\sum_{n=0}^\infty \sum_{k=0}^\infty \hbar^{2n} \lambda^k \rho_{n,k}(t; \vec{q},\vec{p}) &=
\rho_{\ini}(t; \vec{q}, \vec{p}) + \sum_{n=0}^\infty \sum_{n'=0}^\infty \sum_{k=0}^\infty a_{n'} \hbar^{2(n+n')} \lambda^{k+1} \\
&\int\limits_{\RR^3} \int\limits_{-\infty}^{t} \left( V(\vec{q} - \frac{\vec{p}}{\mu}(t - \tau)) D^{2n'+1}_\omega \delta^{(3)}(\vec{p} - \vec{p'}) \right) \rho_{n,k}(\tau; \vec{q} - \frac{\vec{p}}{\mu}(t - \tau), \vec{p'}) \dd \tau \ddd{3} p'.
\end{split}
\end{equation}
We proceed by comparing terms of the same powers in $\lambda$ on both sides. First of all observe that for $k = 0$ we have $\rho_{0,0} = \rho_{\ini}$ and $\rho_{n,0} = 0$ for $n > 0$. For $k \geq 1$ we get
\begin{equation}
\begin{split}
\sum_{n=0}^\infty \hbar^{2n} \rho_{n,k}(t; \vec{q},\vec{p}) = \sum_{n=0}^\infty \sum_{n'=0}^\infty a_{n'} \hbar^{2(n+n')} \int\limits_{\RR^3} \int\limits_{-\infty}^{t} &\left( V(\vec{q} - \frac{\vec{p}}{\mu}(t - \tau)) D^{2n'+1}_\omega \delta^{(3)}(\vec{p} - \vec{p'}) \right) \\
&\rho_{n,k-1}(\tau; \vec{q} - \frac{\vec{p}}{\mu}(t - \tau), \vec{p'}) \dd \tau \ddd{3} p'.
\end{split}
\end{equation}
Using the general identity $\sum_{n=0}^\infty \sum_{n'=0}^\infty f(n,n') =
\sum_{N=0}^\infty \sum_{\tilde{n}=0}^N f(\tilde{n}, N-\tilde{n})$ we get
\begin{equation}
\begin{split}
\sum_{n=0}^\infty \hbar^{2n} \rho_{n,k}(t; \vec{q},\vec{p}) = \sum_{N=0}^\infty \sum_{\tilde{n}=0}^N a_{(N-\tilde{n})} \hbar^{2 N} \int\limits_{\RR^3} \int\limits_{-\infty}^{t} &\left( V(\vec{q} - \frac{\vec{p}}{\mu}(t - \tau) ) D^{2(N-\tilde{n})+1}_\omega \delta^{(3)}(\vec{p} - \vec{p'}) \right) \\
&\rho_{\tilde{n},k-1}(\tau; \vec{q} - \frac{\vec{p}}{\mu}(t - \tau), \vec{p'}) \dd \tau \ddd{3} p'
\end{split}
\end{equation}
and by comparing the terms with the same power in $\hbar$ we obtain
\begin{equation}
\rho_{n,k}(t; \vec{q},\vec{p}) = \sum_{\tilde{n}=0}^n a_{(n-\tilde{n})} \int\limits_{\RR^3} \int\limits_{-\infty}^{t} \left( V(\vec{q} - \frac{\vec{p}}{\mu}(t - \tau) ) D^{2(n-\tilde{n})+1}_\omega \delta^{(3)}(\vec{p} - \vec{p'}) \right) \rho_{\tilde{n},k-1}(\tau; \vec{q} - \frac{\vec{p}}{\mu}(t - \tau), \vec{p'}) \dd \tau \ddd{3} p'.
\end{equation}
Moreover we perform the substitution $t - \tau = u$, we get
\begin{equation} \label{eq:scattering-rhonk}
\rho_{n,k}(t; \vec{q},\vec{p}) = \sum_{\tilde{n}=0}^n a_{(n-\tilde{n})} \int\limits_{\RR^3} \int\limits_{0}^{\infty} \left( V(\vec{q} - \frac{\vec{p}}{\mu} u) D^{2(n-\tilde{n})+1}_\omega \delta^{(3)}(\vec{p} - \vec{p'}) \right) \rho_{\tilde{n},k-1}(t-u; \vec{q} - \frac{\vec{p}}{\mu} u, \vec{p'}) \dd u \ddd{3} p'.
\end{equation}
We next prove that $\rho_{n,k}$ does not explicitly depend on time $t$, the physical intuition for this is that the initial state $\rho_{\ini}$ does not depend on time and since the experiment starts at $t = -\infty$, for any finite time the flow of the scattered particles must have stabilized. Mathematically we see from Eq.~\eqref{eq:scattering-rhonk} that $\rho_{n,k}$ depends on $t$ only if $\rho_{\tilde{n},k-1}$ depends on $t$ for some $\tilde{n} \in \{0, \ldots, n\}$. Since we already argued that $\rho_{\tilde{n}, 0}$ does not depend on $t$, we get that $\rho_{\tilde{n}, 1}$ does not depend on $t$. Proceeding by induction we get that $\rho_{n,k}$ does not depend on $t$ for all $k$ and $n$. For this reason we will drop the explicit time dependence and we will write
\begin{equation}
\rho_{n,k}(\vec{q}, \vec{p}) = \rho_{n,k}(t; \vec{q}, \vec{p})
\end{equation}
and
\begin{equation}
\rho(\vec{q}, \vec{p}) = \rho(t; \vec{q}, \vec{p}).
\end{equation}

Observe that in the expression $V(\vec{q} - \frac{\vec{p}}{\mu} u) D^{2(n-\tilde{n})+1}_\omega \delta^{(3)}(\vec{p} - \vec{p'})$ the nonzero contributions are only from terms where $\frac{\partial}{\partial q_i}$ acts on $V(\vec{q} - \frac{\vec{p}}{\mu} u)$ and so we have
\begin{equation}
V(\vec{q} - \frac{\vec{p}}{\mu} u) D^{n'}_\omega \delta^{(3)}(\vec{p} - \vec{p'}) = \sum_{\substack{i_1 = 1\\ \ldots \\ i_{n'} = 1}}^3 \left( \dfrac{\partial}{\partial q_{i_{n'}}} \cdots \dfrac{\partial}{\partial q_{i_1}} V(\vec{q} - \frac{\vec{p}}{\mu} u) \right) \left( \dfrac{\partial}{\partial p_{i_{n'}}} \cdots \dfrac{\partial}{\partial p_{i_1}} \delta^{(3)}(\vec{p} - \vec{p'}) \right),
\end{equation}
here $n' \in \NN$ is an odd number. Since we have
\begin{equation}
\dfrac{\partial}{\partial q_i} \dfrac{1}{\abs{\vec{q} - \dfrac{\vec{p}}{\mu} u}^k} = \dfrac{-k (q_i-\frac{p_i}{\mu} u)}{\abs{\vec{q} - \dfrac{\vec{p}}{\mu} u}^{k+2}}
\end{equation}
and every derivative either acts on the term in the denominator in this way, or acts on the polynomial in the numerator, we see that every derivative decreases the order $\abs{\vec{q}}$ by one. We thus have
\begin{equation}
V(\vec{q} - \frac{\vec{p}}{\mu} u) D^{2(n-\tilde{n})+1}_\omega \delta^{(3)}(\vec{p} - \vec{p'}) = \dfrac{\tilde{f}_{2(n-\tilde{n})+1}(\frac{\vec{q}}{\abs{\vec{q}}} - \frac{\vec{p}}{\mu} \frac{u}{\abs{\vec{q}}}, \vec{p}, \vec{p'}) }{\abs{\vec{q}}^{2(n - \tilde{n}) + 2}}
\end{equation}
where $\tilde{f}_{2(n-\tilde{n})+1}$ is some suitable function which also contains derivations of the Dirac distributions. Plugging this expression into Eq.~\eqref{eq:scattering-rhonk} we get
\begin{equation}
\rho_{n,k}(\vec{q},\vec{p}) = \sum_{\tilde{n}=0}^n a_{(n-\tilde{n})} \int\limits_{\RR^3} \int\limits_{0}^{\infty} \dfrac{\tilde{f}_{2(n-\tilde{n})+1}(\frac{\vec{q}}{\abs{\vec{q}}} - \frac{\vec{p}}{\mu} \frac{u}{\abs{\vec{q}}}, \vec{p}, \vec{p'}) }{\abs{\vec{q}}^{2(n - \tilde{n}) + 2}} \rho_{\tilde{n},k-1}(\vec{q} - \frac{\vec{p}}{\mu} u, \vec{p'}) \dd u \ddd{3} p'.
\end{equation}
Using the substitution $u = \abs{\vec{q}} v$ we get
\begin{equation} \label{eq:scattering-rhonkAbsq}
\rho_{n,k}(\vec{q},\vec{p}) = \sum_{\tilde{n}=0}^n \dfrac{a_{(n-\tilde{n})}}{\abs{\vec{q}}^{2(n - \tilde{n}) + 1}} \int\limits_{\RR^3} \int\limits_{0}^{\infty} \tilde{f}_{2(n-\tilde{n})+1}(\frac{\vec{q}}{\abs{\vec{q}}} - \frac{\vec{p}}{\mu} v, \vec{p}, \vec{p'}) \rho_{\tilde{n},k-1}(\vec{q} - \frac{\vec{p}}{\mu} \abs{\vec{q}} v, \vec{p'}) \dd v \ddd{3} p'.
\end{equation}
and we see that the last term inside the integral that depends on $\abs{\vec{q}}$ is $\rho_{\tilde{n},k-1}(\vec{q} - \frac{\vec{p}}{\mu} \abs{\vec{q}} v, \vec{p'})$. At this point it is useful to compute $\rho_{n,1}(\vec{q},\vec{p})$ to get some intuition for the following calculations. Using that $\rho_{0,0} = \rho_{\ini}$ and $\rho_{n,0} = 0$ for $n \geq 1$ we get
\begin{equation} \label{eq:scattering-rhon1Absq}
\rho_{n,1}(\vec{q},\vec{p}) = \dfrac{a_n}{\abs{\vec{q}}^{2n + 1}} \int\limits_{\RR^3} \int\limits_{0}^{\infty} \tilde{f}_{2n+1}(\frac{\vec{q}}{\abs{\vec{q}}}, \frac{\vec{q}}{\abs{\vec{q}}} - \frac{\vec{p}}{\mu} v, \vec{p}, \vec{p'}) \nu \delta^{(3)}(\vec{p'} - \vec{p}_0) \dd v \ddd{3} p'
\end{equation}
and so we see that the dependence on $\abs{\vec{q}}$ becomes explicit. We will proceed as follows: assume that for a given $k \in \NN$ and for all $n \in \NN$ we have
\begin{equation}
\rho_{n,k-1}(\vec{q}, \vec{p}) = \dfrac{\tilde{\rho}_{n,k-1}(\frac{\vec{q}}{\abs{\vec{q}}}, \vec{p})}{\abs{\vec{q}}^{\alpha(n,k-1)}}
\end{equation}
where $\alpha(n,k-1) \in \NN$. Then, using Eq.~\eqref{eq:scattering-rhonkAbsq} we get
\begin{equation}
\rho_{n,k}(\vec{q},\vec{p}) = \sum_{\tilde{n}=0}^n \dfrac{a_{(n-\tilde{n})}}{\abs{\vec{q}}^{2(n - \tilde{n}) + 1}} \int\limits_{\RR^3} \int\limits_{0}^{\infty} \tilde{f}_{2(n-\tilde{n})+1}(\frac{\vec{q}}{\abs{\vec{q}}} - \frac{\vec{p}}{\mu} v, \vec{p}, \vec{p'}) \dfrac{\tilde{\rho}_{\tilde{n},k-1} \left( \frac{\vec{q} - \frac{\vec{p}}{\mu} \abs{\vec{q}} v}{\abs{\vec{q} - \frac{\vec{p}}{\mu} \abs{\vec{q}} v}}, \vec{p'} \right)}{\abs{ \vec{q} - \frac{\vec{p}}{\mu} \abs{\vec{q}} v}^{\alpha(\tilde{n},k-1)}} \dd v \ddd{3} p'
\end{equation}
which leads to
\begin{equation}
\rho_{n,k}(\vec{q},\vec{p}) = \sum_{\tilde{n}=0}^n \dfrac{a_{(n-\tilde{n})}}{\abs{\vec{q}}^{2(n - \tilde{n}) + 1 + \alpha(\tilde{n},k-1)}} \int\limits_{\RR^3} \int\limits_{0}^{\infty} \tilde{f}_{2(n-\tilde{n})+1}(\frac{\vec{q}}{\abs{\vec{q}}} - \frac{\vec{p}}{\mu} v, \vec{p}, \vec{p'}) \dfrac{\tilde{\rho}_{\tilde{n},k-1} \left( \frac{ \frac{\vec{q}}{\abs{\vec{q}}} - \frac{\vec{p}}{\mu} v}{\abs{\frac{\vec{q}}{\abs{\vec{q}}} - \frac{\vec{p}}{\mu} v}}, \vec{p'} \right)}{\abs{\frac{\vec{q}}{\abs{\vec{q}}} - \frac{\vec{p}}{\mu} v}^{\alpha(\tilde{n},k-1)}} \dd v \ddd{3} p'.
\end{equation}
We now want to show that $2(n - \tilde{n}) + 1 + \alpha(\tilde{n},k-1)$ does not depend on $\tilde{n}$. For $k = 2$ this is straightforward as we see from Eq.~\eqref{eq:scattering-rhon1Absq} that $\alpha(\tilde{n},1) = 2\tilde{n}+1$ thus we get:
\begin{equation}
2(n - \tilde{n}) + 1 + \alpha(\tilde{n},1) = 2n+2.
\end{equation}
Generalizing this assume that $\alpha(\tilde{n}, k-1) = 2n + k$ for some $k \in \NN$, then
\begin{equation}
2(n - \tilde{n}) + 1 + \alpha(\tilde{n},k-1) = 2n + k.
\end{equation}
Since all of our assumptions hold for $k=1$, we get using induction that $\alpha(n,k) = 2n + k$ and
\begin{equation} \label{eq:scattering-rhonkFinal}
\rho_{n,k}(\vec{q},\vec{p}) = \dfrac{\tilde{\rho}_{n,k}(\frac{\vec{q}}{\abs{\vec{q}}}, \vec{p})}{\abs{\vec{q}}^{2n + k}}
\end{equation}
where
\begin{equation}
\tilde{\rho}_{n,k}(\frac{\vec{q}}{\abs{\vec{q}}}, \vec{p}) = \sum_{\tilde{n}=0}^n a_{(n-\tilde{n})} \int\limits_{\RR^3} \int\limits_{0}^{\infty} \tilde{f}_{2(n-\tilde{n})+1}(\frac{\vec{q}}{\abs{\vec{q}}} - \frac{\vec{p}}{\mu} v, \vec{p}, \vec{p'}) \dfrac{\tilde{\rho}_{\tilde{n},k-1} \left( \frac{ \frac{\vec{q}}{\abs{\vec{q}}} - \frac{\vec{p}}{\mu} v}{\abs{\frac{\vec{q}}{\abs{\vec{q}}} - \frac{\vec{p}}{\mu} v}}, \vec{p'} \right)}{\abs{\frac{\vec{q}}{\abs{\vec{q}}} - \frac{\vec{p}}{\mu} v}^{\alpha(\tilde{n},k-1)}} \dd v \ddd{3} p'.
\end{equation}
In order to compute the differential cross section we are only interested in the spatial density of the particles $D(\vec{q}) = \int_{\RR^3} \rho(\vec{q}, \vec{p}) \ddd{3} p$, thus we want to compute
\begin{equation}
D_{n,k}(\vec{q}) = \int_{\RR^3} \rho_{n,k}(\vec{q}, \vec{p}) \ddd{3} p.
\end{equation}
Using Eq.~\eqref{eq:scattering-rhonkFinal} we get
\begin{equation}
D_{n,k}(\vec{q}) = \dfrac{f_{n,k}(\vartheta)}{\abs{\vec{q}}^{2n+k}}
\end{equation}
where $f_{n,k}(\vartheta)$ is some function of the scattering angle $\vartheta$. Due to the symmetry of the scattering problem the spatial density $D_{n,k}(\vec{q})$ cannot depend on the polar angle $\varphi$, but only on the azimuthal angle $\vartheta$, which coincides with the scattering angle. That is why $f_{n,k}(\vartheta)$ depends only on the scattering angle. The density of particles that reach the detector is given as $\lim_{\abs{\vec{q}} \to \infty} D(\vec{q}) \abs{\vec{q}}^2 \dd \Omega$, where $\abs{\vec{q}}^2 \dd \Omega$ is the infinitesimal surface element, $\dd \Omega = \sin(\vartheta) \dd \vartheta \dd \varphi$. We get for $\lambda=1$
\begin{equation}
\lim_{\abs{\vec{q}} \to \infty} D(\vec{q}) \ddd{3} q = \lim_{\abs{\vec{q}} \to \infty} \sum_{n=0}^\infty \sum_{k=0}^\infty \hbar^{2n} D_{n,k}(\vec{q}) \abs{\vec{q}}^2 \dd \Omega
= \lim_{\abs{\vec{q}} \to \infty} \sum_{n=0}^\infty \sum_{k=0}^\infty \hbar^{2n} \dfrac{f_{n,k}(\vartheta)}{\abs{\vec{q}}^{2n+k-2}} \dd \Omega.
\end{equation}
In the limit we get nonzero contributions only from the terms where $2n+k-2 \geq 0$, these are: $n = 0$, $k \leq 2$ and $n = 1$, $k = 0$. Since $\rho_{1,0} = 0$ we have nonzero contribution only from the terms where $n = 0$. But these are only the terms that are obtained using the Poisson bracket, hence $\lim_{\abs{\vec{q}} \to \infty} D(\vec{q}) \abs{\vec{q}}^2 \dd \Omega$ must coincide with the predictions of classical theory.

%\bibliography{citation}

\end{document}